\documentclass[a4paper, UKenglish, cleveref, autoref, thm-restate]{lipics-v2021}

\pdfoutput=1 %
\hideLIPIcs  %

\usepackage[T1]{fontenc}

\usepackage{latexsym}
\usepackage{amsfonts}
\usepackage{amsmath,amssymb}
\usepackage{mathtools}
\usepackage{hyperref}
\usepackage[linesnumbered,lined,ruled,noend,vlined]{algorithm2e}
\usepackage{cleveref}
\usepackage{tabularx,environ}
\usepackage{comment}
\usepackage{cite}
\usepackage{url}
\usepackage[full]{complexity}
\usepackage{lscape}
\usepackage{xspace}
\usepackage{tikz}
\usepackage{placeins}
\usepackage{multirow}

\newcounter{thmcounter}

\usepackage[appendix=inline]{apxproof}
\newtheoremrep{theorem}[thmcounter]{Theorem}
\newtheoremrep{lemma}[thmcounter]{Lemma}
\makeatletter
\let\c@observation\c@thmcounter

\makeatother

\newcommand{\subtreeiso}{\preceq}
\newcommand{\treeiso}{\simeq}
\newcommand{\set}[1]{\{#1\}}
\newcommand{\seq}[1]{(#1)}

\newcommand{\inset}[2]{\{#1 \mid #2\}}

\newcommand{\size}[1]{| #1 |}

\newcommand{\mis}[1]{mis(#1)}

\newtheorem{problem}[lemma]{\bf Problem}
\crefname{problem}{Problem}{problem}

\Crefname{algocf}{Algorithm}{Algorithms}

\makeatletter
\@ifpackageloaded{algorithm2e}{
    \SetKwProg{Fn}{Function}{}{}
    \SetKwProg{Procedure}{Procedure}{}{}
    \SetKwProg{Subprocedure}{Subprocedure}{}{}
    \SetKwComment{tcc}{//}{}
    \SetKw{Continue}{continue}
    \SetKwFunction{Output}{Output}%
    \SetKwFunction{EnumMC}{EMC}
    \SetKwInOut{AlgInput}{Input}
    \SetKwInOut{AlgOutput}{Output}
    \SetKwInOut{AlgPrecondition}{Pre-conditions}
    \SetKwInOut{AlgInvariant}{Invariants}
}{}
\makeatother

\newcommand{\ms}[1]{{\ifthenelse{\equal{#1}{}}{{ ms}}{{
ms}\xspace\left(#1\right)}}} %
\newcommand{\li}[1]{{\ifthenelse{\equal{#1}{}}{\ell}{\ell\xspace\left(#1\right)}}}
\newcommand{\posms}[1]{{\ifthenelse{\equal{#1}{}}{{ p}}{{
p}\xspace\left(#1\right)}}} %

\definecolor{darkred}{rgb}{0.75,0,0}

\makeatletter

\newcommand{\problemtitle}[1]{\gdef\@problemtitle{#1}}%
\newcommand{\probleminput}[1]{\gdef\@probleminput{#1}}%
\newcommand{\problemoutput}[1]{\gdef\@problemoutput{#1}}%
\RenewEnviron{problem}{
    \problemtitle{}\probleminput{}\problemoutput{}%
    \BODY%
    \par\addvspace{.5\baselineskip}
    \noindent{
        \framebox[1.02\textwidth][c]{
            \begin{tabularx}{\textwidth}{@{\hspace{.2\parindent}} l X c}
                \multicolumn{2}{@{\hspace{.2\parindent}}l}{{\sc
                \@problemtitle}} \\
                \textbf{Input:}  & \@probleminput \\
                \textbf{Output:} & \@problemoutput%
            \end{tabularx}
        }
    }
    \par\addvspace{.5\baselineskip}
}
\makeatother

\title{The Complexity of Maximal/Closed Frequent Tree Mining on Bounded Height Trees} %

\authorrunning{K. Komoto et al.}

\author{Kenta Komoto}
{Nagoya University, Nagoya, Japan}
{komoto.kenta.v1@s.mail.nagoya-u.ac.jp}
{}
{}

\author{Kazuhiro Kurita}
{Okayama University, Okayama, Japan}
{k-kurita@okayama-u.ac.jp}
{https://orcid.org/0000-0002-7638-3322}
{}

\author{Hirotaka Ono}
{Nagoya University, Nagoya, Japan}
{ono@nagoya-u.jp}
{https://orcid.org/0000-0003-0845-3947}
{}

\Copyright{Kenta Komoto, Kazuhiro Kurita, and Hirotaka Ono}

\ccsdesc[500]{Mathematics of computing~Enumeration}

\keywords{Frequent Tree Mining, Hardness of Enumeration, Dualization, Polynomial-Delay, Reverse Search} %

\category{} %

\relatedversion{} %

\nolinenumbers %

\begin{document}

\maketitle
\begin{abstract}
    Frequent tree mining asks us to enumerate tree patterns that occur frequently in a database of rooted trees.  This problem is motivated by tree-structured data in bioinformatics, such as glycans and
    pseudoknot-free RNA secondary structures.  A direct enumeration of all frequent trees is often highly redundant, because every subtree of a frequent tree is again frequent.
    Closed and maximal frequent trees are standard ways to reduce this redundancy, but their enumeration can still be computationally hard.

    In this paper, we study the effect of bounding the height of the input trees.
    This is a natural restriction for rooted trees, since the height is the depth of the hierarchy.
    We ask whether closed/maximal frequent tree mining remains hard when every input tree has a small height.
    Our results show that the answer depends sharply on the model.
    For rooted unordered trees of height at most $2$, we give a polynomial-delay algorithm for enumerating closed frequent trees.  On the other hand, for rooted ordered trees of height at most $2$, we show that an output-polynomial time algorithm for enumerating closed frequent trees would imply an output-polynomial time algorithm for \textsc{Dualization}.
    For maximal frequent tree enumeration, we prove that no output-polynomial time algorithm exists unless $\P=\NP$ already for rooted ordered trees of height at most $2$ and for rooted unordered trees of height at most $3$.

    Thus, even very small height bounds do not make the enumeration problems
    easy in general.  At the same time, the unordered closed case of height
    at most $2$ admits polynomial-delay enumeration.  These results give a
    height-based classification of the complexity of closed and maximal frequent tree mining on shallow rooted trees.
\end{abstract}

\section{Introduction}
Identifying recurring structural patterns is a fundamental task in
bioinformatics. Many biological objects are naturally represented as
trees or tree-like structures, including glycans, pseudoknot-free RNA
secondary structures, phylogenetic trees, and cell lineage trees.
These representations motivate the search for recurring substructures,
such as conserved motifs, functional patterns, and characteristic
branching patterns. A standard way to formalize such recurrence is by
frequency: a pattern is regarded as relevant if it occurs in many objects
of a database. This leads to frequent tree mining, the algorithmic task
of enumerating tree patterns that occur frequently in a database of
rooted trees. This abstraction appears, for example, in glycan structure
analysis and RNA secondary-structure motif analysis
~\cite{DBLP:journals/tkde/UedaAYAM05,DBLP:conf/kdd/WangSSZC96}.

A basic obstacle in frequent tree mining, independently of the
application domain, is output redundancy.  The family of frequent tree
patterns is downward closed: if a tree pattern is frequent, then all of
its subtrees are frequent as well.  Hence, enumerating all frequent trees
may produce a large number of patterns that are redundant variants or
subpatterns of the same larger recurring structures.
This redundancy issue is not specific to the biological examples above:
it appears not only in glycan tree-pattern mining, but also in
XML-oriented tree mining and, more generally, in algorithms for closed
and maximal frequent subtrees~\cite{DBLP:conf/eccb/HashimotoTSKM08,Termier:2002,Zaki:2002,
10.5555/998688.1007091,doi:10.3233/FUN-2005-661-203}.
For this reason, frequent pattern mining often focuses on condensed representations of the frequent patterns. In this paper, we consider two standard such notions: \emph{closed} frequent trees and \emph{maximal} frequent trees.  Informally, a frequent tree is closed if it admits no proper extension with the same frequency,
whereas it is maximal if it is not contained in any larger frequent tree.
Thus, closed frequent trees remove frequency-equivalent redundancy while
preserving support information, whereas maximal frequent trees provide a
more compact but coarser summary of the frequent patterns.  Even these
condensed families, however, may contain exponentially many patterns.
Therefore, their enumeration is evaluated by output-sensitive
measures,
such as output-polynomial time and polynomial delay~\cite{Johnson:1988}.

The study of these condensed families has developed along two
complementary directions.  On the practical side, many algorithms for
frequent tree mining have been proposed under various modeling choices,
including whether trees are ordered or unordered and whether the output
family is restricted to closed or maximal frequent trees. Frequent subgraph mining has also been studied extensively from the algorithmic and practical side; gSpan is a representative approach that
uses DFS codes and depth-first search to mine frequent connected
subgraphs efficiently~\cite{DBLP:conf/icdm/YanH02}. These works
show that closed and maximal frequent trees are natural targets in
practical tree mining.  On the theoretical side, the enumeration
complexity of closed and maximal frequent pattern mining has been studied
for several classes of patterns, including itemsets, sequences, graphs,
and trees (see \cref{sec:related}).  In particular, Kimelfeld and Kolaitis~\cite{Kimelfeld:2013} showed that maximal
frequent subgraph mining remains hard even when the input graphs are
restricted to trees.
However, their reduction uses trees of height at least 60.  Since the height of a rooted tree represents the depth of the underlying hierarchy, this is a rather deep construction from the viewpoint of rooted-tree
mining; indeed, standard experimental settings for XML-oriented tree mining use much smaller depths.  This motivates a finer question: does the hardness already appear for shallow tree databases, or do small
height restrictions lead to tractable cases?

\subsection{Our contributions}

For such practically motivated bounded-height settings, the complexity of
these problems has remained open. We investigate these problems
when every input tree has small height.
As a positive result, we give a polynomial-delay algorithm
for enumerating closed frequent trees of rooted unordered trees of height at
most $2$.
On the negative side, we establish the hardness results.
The complete complexity landscape is summarized in Table~\ref{tab:complexity-result}.
\textsc{Dualization} (also known as \textsc{Minimal Transversal} and
\textsc{Maximal Independent Set Enumeration} in a hypergraph) is one
of the most important open problems in the field of enumeration algorithms,
and whether an output-polynomial time algorithm exists has remained
open for more than 30 years~\cite{DBLP:journals/jal/FredmanK96,DBLP:journals/dam/EiterMG08}.

\begin{table}[h]
    \caption{Summary of our results, which are in bold.
        PT = Polynomial-time solvable, DelayP = Polynomial delay, Dual-hard =
        If it can be solved in output-polynomial time, then
        \textsc{Dualization} can be solved in output-polynomial time, N-OP
        = No output-polynomial time algorithms unless $\P = \NP$.
    $\dagger$: No explicit proof is provided\cite{Kimelfeld:2013}, but the proof for the unordered case can be easily adapted.}
    \begin{center}
        \begin{tabular}{c|c|c|c|c}
            \hline
            & \multicolumn{2}{c|}{Closed} & \multicolumn{2}{c}{Maximal}  \\ \hline
            & Unordered & Ordered & Unordered & Ordered \\ \hline\hline
            $h = 1$             & \textbf{PT} & \textbf{PT} & \textbf{PT} & \textbf{PT} \\ \hline
            $h = 2$             & \textbf{DelayP} & \multirow{3}{*}{\textbf{Dual-hard}} & \textbf{PT} if $\theta =
            |\mathcal{\mathcal T}|$ & \multirow{4}{*}{\textbf{N-OP}} \\ \cline{1-2}\cline{4-4}
            $h = 3$     & Open &  & \multirow{2}{*}{\textbf{N-OP}} & \\ \cline{1-2}
            $4 \le h \le 60$    & \textbf{Dual-hard} &  &  & \\ \cline{1-3}
            $h \ge 60$ & N-OP~\cite{Kimelfeld:2013} & N-OP$^{\dagger}$~\cite{Kimelfeld:2013} &
            N-OP~\cite{Kimelfeld:2013} &  \\ \hline
        \end{tabular}
    \end{center}
    \label{tab:complexity-result}
\end{table}

\subsection{Related work}\label{sec:related}
The study of closed and maximal frequent patterns has a long history beyond tree databases.
For itemsets, subsequences, and subgraphs, the closed and maximal variants give rise to several natural enumeration problems, and their output-polynomial solvability is now largely understood.
Except for the enumeration of closed frequent itemsets, all these problems are known to admit no output-polynomial time algorithm unless $\P=\NP$~\cite{DBLP:journals/pacmmod/BuzzegaCKKP25,Horvath:2013,
Boros:2003,DBLP:conf/fimi/UnoKA04}.
For frequent subgraph mining, the source of hardness is already visible at the level of finding a single solution: finding a closed or maximal  frequent subgraph is \NP-hard for many graph classes, following standard reductions and the arguments of Horvath et al.~\cite{Horvath:2013}.

The closest prior work is due to Kimelfeld and Kolaitis~\cite{Kimelfeld:2013}:
they showed that, even when the input consists of two trees, maximal frequent subgraph mining admits no output-polynomial time algorithm unless $\P=\NP$.  As mentioned above, their construction uses trees of height at least $60$.

Shallow rooted trees are not merely a technical special case.
In XML-oriented tree mining, XML documents are represented as labeled rooted
trees~\cite{Termier:2002,doi:10.3233/FUN-2005-661-203}; in Zaki's
experiments, the depth of synthetic rooted trees was set to around $10$,
and the rooted-tree string representations in the CSLOGS data set have average length 23.3~\cite{Zaki:2002}.  More generally,
practical rooted-tree mining algorithms have been studied under various
modeling choices, including whether trees are labeled, ordered, or
unordered, and whether the output consists of closed frequent subtrees or
maximal frequent trees~\cite{10.5555/998688.1007091,6137215,
    10.1007/978-3-540-24775-3_9,Balcazar:2010,
doi:10.3233/FUN-2005-661-203}.

\section{Preliminaries}\label{sec:prelim}
\subsection{Rooted trees and subtree isomorphism}
Let $G = (V, E)$ be a graph.
For a graph $G$, we denote the set of vertices and edges of $G$ as
$V(G)$ and $E(G)$, respectively.
A sequence of vertices $P = (v_1, \ldots, v_k)$ is a $v_1$-$v_k$
\emph{path} if for any  $1 \le i < j \le k$,
$v_i \neq v_j$ and $\set{v_i, v_{i + 1}} \in E$.
The \emph{length} of a path $P$ is defined by the number of vertices
in $P$ minus one.
A graph $G$ is \emph{connected} if $G$ has a $u$-$v$ path for any
pair of vertices $u, v \in V$.

A sequence of vertices $P = (v_1, \ldots, v_k)$ is a \emph{cycle} if
$v_1$ equals $v_k$, and $(v_1, \ldots, v_{k-1})$ is a path of length
at least $2$.
If $G$ is connected and has no cycles, then $G$ is a \emph{tree}.
We denote a tree as $T$.
A \emph{rooted tree} is a pair $(T, r)$, where $T$ is a tree and $r$
is a vertex in $T$.
As a shorthand notation, we denote a rooted tree $(T, r)$ and the
root as $T$ and $r(T)$, respectively.
The \emph{height} of a vertex $v$ in $T$ is the length of the $r$-$v$ path.
Moreover, the \emph{height} of $T$ is the maximum length of $r$-$v$
path for $v \in V(T)$.
For a vertex $v$, $P = (v_1 = r, \ldots, v_k = v)$ is the $r$-$v$ path in $T$.
A vertex $v_{k-1}$ is the \emph{parent} of $v$, and the vertices
$v_1, \ldots, v_{k}$ are \emph{ancestors} of $v$.
We denote the parent of $v$ as $\mathit{par}(v)$.
For a vertex $v$, a vertex $u$ is a \emph{child} of $v$ if
$\mathit{par}(u) = v$, and
if $v$ is an ancestor of $u$, $u$ is a \emph{descendant} of $v$.
We denote the set of children of $v$ as  $\mathit{ch}(v)$.
A non-root vertex that has no children is a \emph{leaf}.
For rooted trees $T_1 = (V_1, E_1)$ and $T_2 = (V_2, E_2)$, $T_2$ is
a \emph{subtree} of $T_1$ if $V_2 \subseteq V_1$, $E_2 \subseteq E_1$, and
$u$ is a child of $v$ in $T_1$ if and only if $u$ is a child of $v$ in $T_2$.

Let $T$ be a rooted tree and $v$ be a vertex in $T$.
$D(v)$ is the set of descendants of $v$.
A rooted tree $(T(v), v)$ is the \emph{subtree of $T$ rooted at $v$},
where $T(v)$ is a tree $(D(v), \inset{\set{u, w}\in E(T) }{u, w\in D(v)})$.
For two rooted trees $T_1$ and $T_2$,
$T_1$ is \emph{subtree isomorphic} to $T_2$
if $T_2$ has a subtree $T'_2$ and there is a bijection $\varphi:
V(T'_2) \to V(T_1)$ satisfying
$u$ is the parent of $v$ in $T_2$ if and only if $\varphi(u)$ is the
parent of $\varphi(v)$ in $T_1$.
We denote it as $T_1 \subtreeiso T_2$.
If $T_1 \subtreeiso T_2$, we say that $T_1$ is \emph{contained} in
$T_2$ as a subtree.
If $T_1 \subtreeiso T_2$ and $T_2 \subtreeiso T_1$, then we denote it
as $T_1 \treeiso T_2$, and $T_1$ is \emph{isomorphic} to $T_2$.
Moreover, if $T_1 \subtreeiso T_2$ and $T_1 \not\simeq T_2$, we denote $T_1 \prec T_2$.
The bijection $\varphi$ is a \emph{subtree isomorphism mapping}.
Especially, $\varphi$ is a \emph{tree isomorphism mapping} if $T_1$
is isomorphic to $T_2$.

We define rooted \emph{ordered} trees.
Let $T$ be a rooted tree, and for each vertex $v \in V(T)$,
a binary relation $\le_v$ is a total order on $\mathit{ch}(v)$.
An ordered tree is the tuple $(T, \{ \le_v \}_{v \in V(T)})$.
When no confusion arises, we denote it by $T$.
For two rooted ordered trees $T_1$ and $T_2$,
we say that $T_1$ is \emph{subtree isomorphic} to $T_2$
if $T_2$ has a subtree $T'_2$ and there exists a bijection
$\varphi: V(T'_2) \to V(T_1)$ such that
for all $u,v \in V(T'_2)$,
$u$ is the parent of $v$ in $T'_2$ if and only if
$\varphi(u)$ is the parent of $\varphi(v)$ in $T_1$, and
for all $u,v \in V(T'_2)$ with $\mathit{par}(u)=\mathit{par}(v)$,
$u \le_{\mathit{par}(u)} v$ if and only if
$\varphi(u) \le_{\mathit{par}(\varphi(u))} \varphi(v)$.
If $T_1 \subtreeiso T_2$ and $T_2 \subtreeiso T_1$, we write
$T_1 \treeiso T_2$ and say that $T_1$ is \emph{isomorphic} to $T_2$.
Moreover, if $T_1 \subtreeiso T_2$ and $T_1 \not\simeq T_2$, we denote $T_1 \prec T_2$.
The bijection $\varphi$ is a \emph{subtree isomorphism mapping}.
If $T_1$ is isomorphic to $T_2$, $\varphi$ is a \emph{tree isomorphism mapping}.

\subsection{Frequent, closed, and maximal trees}
For a sequence of rooted unordered trees $\mathcal T = \seq{T_1, \ldots, T_n}$,
the \emph{support}  of $T$ is the number of trees in $\mathcal T$
that contain $T$ as a subtree.
More precisely, the support of $T$ is defined by $\size{\inset{R \in
\mathcal T}{ T \subtreeiso R}}$.
We denote by $\mathcal T(T)$ the set of trees in $\mathcal T$ that contain $T$ as a subtree.
A rooted tree $T$ is \emph{$\theta$-frequent} if the support of $T$
is at least $\theta$.
In particular, a $\size{\mathcal T}$-frequent tree is a \emph{common
subtree} of $\mathcal T$.
A $\theta$-frequent tree $T$ is \emph{maximal} if, for any
$\theta$-frequent tree $T'$, $T \not\prec  T'$.
When no confusion arises, we  denote a (maximal) $\theta$-frequent
tree as a (maximal) frequent tree.
For a sequence of rooted trees $\mathcal T$,
a rooted tree $T$ is \emph{closed} if,
for any rooted tree $T'$ satisfying $T \prec T'$,
the support of $T$ is larger than the support of $T'$.
The above definition is given in the same way for ordered trees.

\subsection{Enumeration problems and complexity measures}

In this paper, we discuss the following four enumeration problems.

\begin{problem}
    \problemtitle{Maximal Frequent Tree Mining}
    \probleminput{A sequence of rooted unordered/ordered trees
    $\mathcal{T}$ and an integer $\theta$.}
    \problemoutput{All maximal $\theta$-frequent unordered/ordered
    trees of $\mathcal{T}$.}
\end{problem}

\begin{problem}
    \problemtitle{Closed Frequent Tree Mining}
    \probleminput{A sequence of rooted unordered/ordered trees
    $\mathcal{T}$ and an integer $\theta$. }
    \problemoutput{All closed $\theta$-frequent unordered/ordered trees
    of $\mathcal{T}$. }
\end{problem}

For an instance $I$ of an enumeration problem, let $\mathrm{Sol}(I)$
denote the set of solutions to be output.  An enumeration algorithm is
said to run in \emph{output-polynomial time} if its total running time is
bounded by a polynomial in $|I|+|\mathrm{Sol}(I)|$.  The \emph{delay} of
an enumeration algorithm is the maximum time between two consecutive
outputs, including the time before the first output and the time after
the last output.  An algorithm has \emph{polynomial delay} if this delay
is bounded by a polynomial in the input size.

\section{Maximal Common Tree Mining}\label{sec:common-subtree-mining}
We address \textsc{Maximal Common Tree Mining}.
It is a special case of \textsc{Maximal Frequent Tree Mining} and
\textsc{Closed Frequent Tree Mining} by setting $\theta = \size{\mathcal T}$.
We show that when each rooted unordered tree has height of at most $2$,
the maximal common tree is uniquely determined.
This observation is essential for our algorithm in \Cref{sec:frequent-subtree-mining}.
In addition, we show the dual-hardness of
\textsc{Closed Frequent Tree Mining} in the unordered case if each tree has height of at most $4$.

For the ordered case, we show that
\textsc{Closed Frequent Tree Mining} is also dual-hard, even if each tree has height of at most $2$.

\subsection{The Unordered Case}
\subsubsection{The Tractable Case: Height at most \texorpdfstring{$2$}{2}}

For a tree $T$ with a height at most $2$, let $\ell_T$ be the number of children of
$r(T)$ and $d(T)=(d^T_1,\ldots,d^T_{\ell_T})$ be the non-increasing degree sequence of children in $ch(r(T))$.
Thus, a tree of height at most $2$ is uniquely determined by such a non-increasing degree sequence.

\begin{theorem}\label{thm:uniqueness-of-common-trees-of-n-trees}
    Let $\mathcal T$ be a set of trees of height at most $2$.
    Then $\mathcal T$ has a unique maximal common tree.
\end{theorem}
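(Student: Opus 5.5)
We will characterize subtree isomorphism between unordered trees of height at most $2$ through their degree sequences, and then build the maximal common tree coordinatewise.

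\emph{Step 1 (containment criterion).} We first show that for trees $S,T$ of height at most $2$, $S \subtreeiso T$ holds if and only if $\ell_S \le \ell_T$ and $d^S_i \le d^T_i$ for all $1 \le i \le \ell_S$. Here the sequences are non-increasing, and ``degree'' means the number of children.

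Sufficiency is direct. Map the root to the root, map the $i$-th child of $r(S)$ to the $i$-th child of $r(T)$, and map its $d^S_i$ children injectively into the $d^T_i$ children of the corresponding vertex of $T$.

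Necessity is the main obstacle, though a standard one. A subtree isomorphism must send root to root, since subtrees here share the root. It therefore induces an injection $\pi$ from the children of $r(S)$ to the children of $r(T)$ with $d^S_j \le d^T_{\pi(j)}$. For each $i$, the $i$ children of largest degree in $S$ are mapped to $i$ distinct children of $T$, each of degree at least $d^S_i$. Hence $T$ has at least $i$ children of degree at least $d^S_i$, which gives $d^T_i \ge d^S_i$. This is the usual majorization/Hall-type sorting argument.

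\emph{Step 2 (construction).} Let $\mathcal T=(T_1,\ldots,T_n)$. Define $\ell^*=\min_k \ell_{T_k}$ and $d^*_i=\min_k d^{T_k}_i$ for $1\le i\le \ell^*$. The sequence $(d^*_i)$ is non-increasing, because it is a pointwise minimum of non-increasing sequences. It therefore determines a tree $T^*$ of height at most $2$. By Step 1, $T^*\subtreeiso T_k$ for every $k$, so $T^*$ is a common tree.

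\emph{Step 3 (uniqueness).} Any common tree $S$ has height at most $2$, since it embeds into trees of height at most $2$. By Step 1 applied to each $T_k$, we get $\ell_S\le \ell^*$ and $d^S_i\le d^*_i$ for all $i \le \ell_S$. Applying Step 1 once more gives $S\subtreeiso T^*$.

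So every common tree is contained in $T^*$. Consequently, any maximal common tree $S$ satisfies $S\subtreeiso T^*$, and maximality forces $S\treeiso T^*$. Therefore $T^*$ is the unique maximal common tree, up to isomorphism.
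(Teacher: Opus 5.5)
Your proposal is correct and follows essentially the paper's route: you build the common tree from the pointwise minimum of the sorted child-degree sequences and compare trees via the coordinatewise containment criterion. Your cut-off $\ell^*=\min_k \ell_{T_k}$ is exactly what the paper's $(-1)$-padding achieves. Your sorting argument for the necessity direction of that criterion fills in a step the paper only asserts.
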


\begin{proof}
    Let $\ell$ be the maximum number of children of the root among the trees in $\mathcal T$.
    We use $(-1)$-padding to make every degree sequence have length $\ell$.
    That is, we identify $(d^T_1,\ldots,d^T_{\ell_T})$ with
    $(d^T_1,\ldots,d^T_{\ell_T},-1,\ldots,-1)$,
    where the resulting sequence has length $\ell$.
    Here, $-1$ represents the absence of a child.

    For two trees $Q$ and $R$ of height at most $2$, $Q\subtreeiso R$
    if and only if $d^Q_i \le d^R_i$
    for every $1 \le i \le  \ell$, where the degree sequences are
    understood with $(-1)$-padding to length $\ell$.
    We define a sequence $\pi=(p_1,\ldots,p_\ell)$ by
    $p_i=\min_{1\le j\le \size{\mathcal T}} d^{T_j}_i$
    for every $1 \le i \le \ell$.
    Since each $d(T_j)$ is non-increasing, $\pi$ is non-increasing.
    Let $\pi'$ be the sequence obtained from $\pi$ by removing all $-1$,
    and let $M$ be the tree of height at most $2$ whose
    degree sequence is $\pi'$.

    By construction, $p_i \le d^{T_j}_i$ for any integers $i$ and $j$.
    Hence $M \subtreeiso T_j$ for every $1 \le j \le \size{\mathcal T}$, and
    $M$ is a common tree of $\mathcal T$.
    We show that every common tree of $\mathcal T$ is contained in $M$.
    Let $X$ be a common tree of $\mathcal T$.
    Since $X\subtreeiso T_j$ for every $j$, we have
    $d^X_i \le d^{T_j}_i$ for every $i$, where $d(X)$ is also understood with
    $(-1)$-padding to length $\ell$.
    Therefore, $d^X_i \le \min_{1\le j\le \size{\mathcal T}} d^{T_j}_i = p_i$.
    Hence $X \subtreeiso M$.
    Thus $M$ contains every common tree of $\mathcal T$.
    Therefore, $M$ is the unique maximum common tree of $\mathcal T$.
\end{proof}

\subsubsection{Dual-Hardness of height at most \texorpdfstring{$4$}{4}}\label{ssec:dual-hardness-of-height4}

Let $\mathcal H = (V,\mathcal E)$ be a hypergraph.
A set of vertices $I$ is an \emph{independent set} of $\mathcal H$ if
$E \not\subseteq I$ for every hyperedge $E\in\mathcal E$.
An independent set $I$ is \emph{maximal} if there is no independent set
$I'$ with $I \subset I'$.
In what follows, we identify $V$ with $\set{1,\ldots, n}$
and $\size{\mathcal E}$ as $m$.

We construct a set of trees $\mathcal T(\mathcal H)= \set{S,T_1,\ldots,T_{m}}$
such that maximal common trees of $\mathcal T(\mathcal H)$ are in one-to-one
correspondence with maximal independent sets of $\mathcal H$.
The gadgets used in the reduction are illustrated in \Cref{fig:common:unordered:gadgets}.
For each $i \in V$, we first define a tree $P_i$ and $\bar P_i$, called \emph{vertex selection gadgets}.
The root of $P_i$ has $i$ children $p_1, \ldots, p_i$.
Each child $p_j$ has $2(n - i) + 2$ leaf children.
Then, for any distinct $i, j \in V$, the trees $P_i$ and $P_j$ are
incomparable under rooted subtree isomorphism.
More precisely, both $P_i \not\subtreeiso P_j$ and $P_j \not\subtreeiso P_i$ hold.
In addition, we define $\bar P_i$ by removing one leaf from a child $p_1$.
For each hyperedge $E_j \in \mathcal E$, we define a tree $R_j$ as follows.
The root of $R_j$ has $\size{E_j}$ children $d^j_1, \ldots, d^j_{\size{E_j}}$.
For each $k \in E_j$,
$R_j(d^j_k)$ is isomorphic to $P_k$.
Notice that the following observation holds.

\begin{figure}[t]
    \centering
    \includegraphics[page=1, width=1\linewidth]{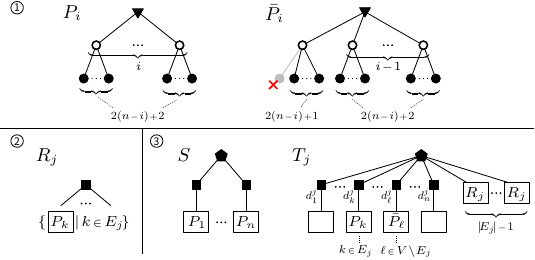}
    \caption{
        The vertex selection gadgets $P_i$ and $\bar{P}_i$ for $n$ and $\size{E_j}$.
        Vertices marked with the same symbol have the same height.
    }
    \label{fig:common:unordered:gadgets}
\end{figure}

\begin{observation}\label{obs:rigidity}
    For every $i\in V$,
    if a tree $T$ satisfies $T\subtreeiso P_i$ and
    $T\not\prec \bar P_i$, then $T\treeiso P_i$ or $T \treeiso \bar P_i$.
\end{observation}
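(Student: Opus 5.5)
The plan is to reduce the statement to a comparison of degree sequences, using the characterization established inside the proof of \Cref{thm:uniqueness-of-common-trees-of-n-trees}. For trees $Q,R$ of height at most $2$, we have $Q\subtreeiso R$ if and only if $d^Q_i\le d^R_i$ for every $i$. Here the non-increasing degree sequences of the root's children are padded with $-1$ to a common length $\ell$.

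First, since $P_i$ has height $2$, any $T$ with $T\subtreeiso P_i$ also has height at most $2$, so the characterization applies to $T$. Let $c=2(n-i)+2$. Then $d(P_i)=(c,\ldots,c)$ with $i$ entries. Since $\bar P_i$ is obtained by deleting one leaf below $p_1$, its sorted sequence is $d(\bar P_i)=(c,\ldots,c,c-1)$, with the entry $c-1$ in position $i$. I pad both to length $\ell=i$. The condition $T\subtreeiso P_i$ forces the root of $T$ to have at most $i$ children, so $d^T$ also fits in length $i$, and it gives $d^T_k\le c$ for all $k\le i$.

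Next I split on the last entry $d^T_i$, which may be $-1$ if $T$ has fewer than $i$ root children.
\begin{itemize}
\item If $d^T_i\le c-1$, then $d^T_k\le c=d^{\bar P_i}_k$ for $k<i$ and $d^T_i\le c-1=d^{\bar P_i}_i$. Hence $T\subtreeiso\bar P_i$. Since $T\not\prec\bar P_i$, it follows that $T\treeiso\bar P_i$.
\item If $d^T_i=c$, then monotonicity of $d^T$ together with the upper bound $c$ gives $d^T_k=c$ for all $k$. Thus $d(T)=d(P_i)$, and since a height-$2$ tree is determined by its degree sequence, $T\treeiso P_i$.
\end{itemize}

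The only delicate point is making sure that the sorted sequence of $\bar P_i$ places the $c-1$ entry last. This is where the non-increasing normalization is essential: it is what lets the single comparison at position $i$ decide between the two cases. The case where $T$ is just a root, or has fewer than $i$ root children, is handled by the $-1$ padding and falls into the first case.
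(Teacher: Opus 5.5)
Your proof is correct. The paper states this as an observation and gives no proof, so there is no competing argument to compare against. Reducing to the pointwise degree-sequence criterion from the proof of \Cref{thm:uniqueness-of-common-trees-of-n-trees} is the natural way to make the observation rigorous. Your two cases are exhaustive because $T\subtreeiso P_i$ gives $d^T_i\le c$:
\begin{itemize}
\item if $d^T_i\le c-1$, you get $T\subtreeiso\bar P_i$, and the hypothesis $T\not\prec\bar P_i$ then gives $T\treeiso\bar P_i$;
\item if $d^T_i=c$, monotonicity forces $d(T)=d(P_i)$, so $T\treeiso P_i$.
\end{itemize}

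One point is worth stating explicitly. Both the criterion and your claim that $T$ has height at most $2$ rely on embeddings sending the root to the root. That is the convention the paper uses throughout; see, e.g., the proof of \Cref{lem:mct-valid-two}. Without it the statement would already fail for $i=1$. A root with $2n$ leaf children embeds into $P_1$ at $p_1$, but it does not embed into $\bar P_1$, and it is isomorphic to neither tree.
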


We now define $\mathcal T(\mathcal H)$.
The root of $S$ has $n$ children $s_1, \ldots, s_{n}$.
Each $s_i$ has the unique child $s'_i$ and $S(s'_i) \simeq P_i$ for every $i \in V$.
For every hyperedge $E_j \in \mathcal E$, the root of $T_j$ has $n + \size{E_j} - 1$ children
$d^j_1,\ldots, d^j_{n}$, and $\size{E_j}-1$ \emph{dummy children} $d^j_{n + 1}, \ldots, d^j_{n + \size{E_j} - 1}$.
Every vertex child has the unique child $d'^j_{i}$, called \emph{vertex children}
For every $i \in V$, we define
$T_j(d'^j_i)\treeiso \bar P_i$ if $i \in E_j$, otherwise
$T_j(d'^j_i)\treeiso P_i$.
Every dummy child $d^j_k$ satisfies $T_j(d^j_k) \treeiso R_j$.
Clearly, every tree in $\mathcal T(\mathcal H)$ has height at most $4$.

For a set of vertices $U \subseteq V$, we define a tree $\mathcal F(U)$ as follows.
The root of $\mathcal F(U)$ has $n$ children $c_1,\ldots,c_{n}$ and
every $c_i$ has the unique child $c'_i$.
For every $c'_i$, $\mathcal F(U)(c'_i)\treeiso P_i$ if $i \in U$,
otherwise, $\mathcal F(U)(c'_i) \treeiso \bar P_i$.

\begin{lemma}\label{lem:common-iff-independent}
    For a set of vertices $U \subseteq V$, $U$ is an independent set of
    $\mathcal H$ if and only if $\mathcal F(U)$ is a common tree of
    $\mathcal T(\mathcal H)$.
\end{lemma}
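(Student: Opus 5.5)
The plan is to prove the two directions separately, using an explicit root-to-root embedding for ``if'' and a pigeonhole argument for ``only if''. Throughout I use three facts about the gadgets. First, a rooted subtree isomorphism sends the root to the root and preserves heights, so a child of the root of $\mathcal F(U)$ goes to a child of the root of the target tree. Second, $\bar P_i \prec P_i$, since $\bar P_i$ is obtained from $P_i$ by deleting a leaf. Third, $P_i$ and $P_l$ are incomparable for $i \neq l$, as noted when the gadgets were defined. Combining the second and third facts gives $P_i \not\subtreeiso \bar P_l$ for every $l \neq i$ (otherwise $P_i \subtreeiso \bar P_l \subtreeiso P_l$). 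Also $P_i \not\subtreeiso \bar P_i$, because $\bar P_i$ has fewer vertices.

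\emph{Containment in $S$.} This holds for every $U$. I map each $c_i$ to $s_i$ and each $c'_i$ to $s'_i$. Then I embed $\mathcal F(U)(c'_i)$, which is $P_i$ or $\bar P_i$, into $S(s'_i)\treeiso P_i$.

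\emph{($\Rightarrow$) for $T_j$.} Let $U$ be independent and fix a hyperedge $E_j$. Since $E_j \not\subseteq U$, there is some $k \in E_j \setminus U$, and hence $\size{E_j \cap U} \le \size{E_j}-1$, which is the number of dummy children of $T_j$. I map the children $c_i$ as follows.
\begin{itemize}
\item For $i \notin E_j$, map $c_i \mapsto d^j_i$ and $c'_i \mapsto d'^j_i$. Here $T_j(d'^j_i) \treeiso P_i$, which contains $\mathcal F(U)(c'_i)$.
\item For $i \in E_j \setminus U$, we have $\mathcal F(U)(c'_i) \treeiso \bar P_i \treeiso T_j(d'^j_i)$, so again map $c_i \mapsto d^j_i$.
\item For $i \in E_j \cap U$, choose distinct dummy children $d$ injectively and map $c_i \mapsto d$. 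Then map $c'_i$ to the child of $d$ whose subtree is $P_i$; this child exists because $T_j(d) \treeiso R_j$ and $i \in E_j$.
\end{itemize}
The images of distinct $c_i$ are distinct, so this defines a subtree isomorphism.

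\emph{($\Leftarrow$).} Suppose $\mathcal F(U)$ is common, and assume for contradiction that $E_j \subseteq U$ for some $j$. Fix an embedding of $\mathcal F(U)$ into $T_j$. For each $i \in E_j$, the vertex $c_i$ goes to some root child $d$ of $T_j$, and $c'_i$ goes to a child of $d$ whose subtree contains $P_i$. Suppose $d$ were a vertex child $d^j_l$. Its only child roots $P_l$ or $\bar P_l$, and if $l=i$ it roots $\bar P_i$ because $i \in E_j$. By the facts above, none of these contains $P_i$. Hence every $c_i$ with $i\in E_j$ lands on a dummy child, and these images are distinct. This places $\size{E_j}$ vertices injectively into $\size{E_j}-1$ dummy children, which is impossible.

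The main obstacle is making the non-embeddability facts fully rigorous: that $P_i \not\subtreeiso \bar P_l$ for all $l$, including $l = i$. This rests on the claimed incomparability of the $P_i$. I would verify it directly. The root of $P_i$ has $i$ children, each with $2(n-i)+2$ leaves. An embedding of $P_i$ into $P_l$ needs $i \le l$ to fit the root's children, and $2(n-i)+2 \le 2(n-l)+2$, i.e.\ $l \le i$, to fit the leaves; together these force $i = l$.
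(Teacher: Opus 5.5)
Your proof is correct and takes the same approach as the paper: an explicit embedding into each $T_j$ for the forward direction, and a pigeonhole argument on the $|E_j|-1$ dummy children for the converse. The differences are small: you send all indices in $E_j \cap U$ to dummy children, where the paper fixes one $i \in E_j \setminus U$ and sends $E_j \setminus \{i\}$ to dummies, and you also check containment in $S$ and use gadget incomparability to rule out mapping $c_i$ to a vertex child $d^j_l$ with $l \neq i$, two steps the paper leaves implicit.
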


\begin{proof}
    Suppose that $U$ is independent.
    For every $i \notin E_j$,
    there is a vertex child $d'^j_i$ such that
    $T_j(d'^j_i)$ is isomorphic to $P_i$.
    Therefore, $c'_i$ can be mapped to $d'^j_i$.
    We consider a vertex $i \in E_j$.
    For a hyperedge $E_j$,
    there exists a vertex $i \in E_j\setminus U$
    since $U$ is independent.
    Then, $\mathcal F(U)(c'_i) \treeiso \bar P_i$, and hence $c'_i$ can be mapped
    to the vertex child $d'^j_i$ of $T_j$.

    For every $k \in E_j \setminus \set{i}$,
    we map $c_k$ to a distinct dummy child
    since $T_j$ has $\size{E_j}-1$ dummy children, and
    $R_j$ contains $P_k$ for any $k \in E_j$.
    Therefore, $\mathcal F(U) \subtreeiso T_j$.
    Since this holds for any $j$, $\mathcal F(U)$ is a common tree.

    Conversely, suppose that $U$ is not independent.
    Then there exists a hyperedge $E_j$ such that $E_j\subseteq U$.
    For every $v \in E_j$, the tree $\mathcal F(U)$ has a child $c_v$ such that
    the subtree rooted at $c_v$ is isomorphic to $P_v$.
    A subtree $T_j(d^j_v)$ is isomorphic to $\bar P_v$, and
    $P_v \not\subtreeiso \bar P_v$.
    Therefore, each of these $\size{E_j}$ children must be mapped to a dummy child.
    However, $T_j$ has only $\size{E_j}-1$ dummy children.
    Hence, $\mathcal F(U)$ is not a common tree.
\end{proof}

\begin{lemma}\label{lem:MIS-to-MCT}
    If $I$ is a maximal independent set of $\mathcal H$, then
    $\mathcal F(I)$ is a maximal common tree of $\mathcal T(\mathcal H)$.
\end{lemma}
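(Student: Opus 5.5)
By \Cref{lem:common-iff-independent}, $\mathcal F(I)$ is a common tree, so the task is maximality. Suppose $X$ is a common tree with $\mathcal F(I) \prec X$. We want a contradiction, and the idea is to show that such an $X$ must itself be of the form $\mathcal F(U)$ with $I \subsetneq U$. Then \Cref{lem:common-iff-independent} makes $U$ independent, contradicting the maximality of $I$.

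\textbf{Step 1: $X$ is contained in $S$.} Because $X \subtreeiso S$, the root of $X$ has at most $n$ children, each with at most one child. Moreover, each grandchild subtree of $X$ embeds into some $P_k$. The root of $\mathcal F(I)$ already has $n$ children, each with one child, and the embedding $\mathcal F(I)\subtreeiso X$ must use all of them. So $X$ also has exactly $n$ root children $x_1,\dots,x_n$, each with a single child $x'_i$, and $\mathcal F(I)(c'_i)$ embeds into $X(x'_i)$ after a suitable relabelling.

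\textbf{Step 2: identify each grandchild subtree.} Since $X\subtreeiso S$, the bijection sends $X(x'_i)$ into some $P_{\sigma(i)}$, where $\sigma$ is a permutation. This gives the chain $\mathcal F(I)(c'_i)\subtreeiso X(x'_{\tau(i)})\subtreeiso P_{\sigma(\tau(i))}$. The tree $\mathcal F(I)(c'_i)$ is $P_i$ or $\bar P_i$. Both have $i$ root children, each with $2(n-i)+2$ leaves except possibly one with $2(n-i)+1$.

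We then need a small incomparability check: $\bar P_i \subtreeiso P_k$ forces $k=i$. Containment requires $k\ge i$ for the number of children, and $n-k\ge n-i$, up to the off-by-one between $2(n-i)+1$ and $2(n-k)+2$, for the number of leaves. Together these force $k=i$, and the parity of the leaf counts is what rules out the boundary case. Hence $\sigma\tau$ is the identity and $\bar P_i\subtreeiso X(x'_{\tau(i)})\subtreeiso P_i$. \Cref{obs:rigidity} then gives $X(x'_{\tau(i)})\treeiso P_i$ or $\bar P_i$. Therefore $X\treeiso\mathcal F(U)$ for some $U$, and $I\subseteq U$ since $P_i\not\subtreeiso\bar P_i$.

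\textbf{Step 3: conclude.} Since $X\not\treeiso\mathcal F(I)$, the containment is strict, $I\subsetneq U$. By \Cref{lem:common-iff-independent}, $U$ is independent, which contradicts the maximality of $I$.

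The main obstacle is Step 2. It needs a clean proof that $\bar P_i\subtreeiso P_k$ implies $k=i$. The parameters ($i$ children, $2(n-i)+2$ leaves) were chosen exactly so that this holds. Step 2 also has to handle the possibility that the embedding permutes the children, which the composition argument above does.
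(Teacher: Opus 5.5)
Your proof is correct and follows the same route as the paper. Any common tree strictly containing $\mathcal F(I)$ must, since it lies inside $S$, keep exactly $n$ root children and can only upgrade some copies of $\bar P_i$ to $P_i$. So it is isomorphic to $\mathcal F(U)$ with $I \subsetneq U$, and \Cref{lem:common-iff-independent} then contradicts the maximality of $I$. Your Step~2 (showing that $\bar P_i \subtreeiso P_k$ forces $k=i$, then applying \Cref{obs:rigidity}) spells out the paper's one-line claim that the only possible extension inside $S$ replaces a $\bar P_i$ by a $P_i$, and you finish with $U$ directly rather than with $I\cup\set{i}$.
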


\begin{proof}
    By \Cref{lem:common-iff-independent}, $\mathcal F(I)$ is a common tree.
    Suppose, for contradiction, that $\mathcal F(I)$ is not maximal.
    Then there exists a common tree $T'$ such that
    $\mathcal F(I) \prec T'$.
    Since $S$ has exactly $n$ children at its root and
    $\mathcal F(I)$ has $n$ children at its root,
    the tree $T'$ cannot be obtained by adding a new child to the root.
    Thus $T'$ must extend at least one child subtree of $\mathcal F(I)$.

    The only possible extension inside $S$ is to replace, for some
    $i \notin I$, a copy of $\bar P_i$ by a copy of $P_i$.
    Therefore $\mathcal F(I \cup \set{i})$ is also a common tree.
    By \Cref{lem:common-iff-independent}, $I \cup \set{i}$ is independent,
    contradicting the maximality of $I$.
    Hence $\mathcal F(I)$ is a maximal common tree.
\end{proof}

We next prove the converse direction.
Let $T$ be a maximal common tree of $\mathcal T(\mathcal H)$.
Since the maximality of $T$ and $T$ is contained in $S$,
$r(T)$ has $n$ children and there is a bijection between
$ch(r(S))$ and $ch(r(T))$.
In what follows, we assume that
children $ch(r(T))$ are indexed using
a subtree isomorphism mapping $\varphi$ from $V(T)$ to $V(S)$.
More precisely, the index of $c$ is $i$ if $\varphi(c) = s_i$, where
$\varphi$ is a subtree isomorphism mapping from $V(T)$ to $V(S)$.

\begin{lemma}\label{lem:dummy-capacity}
    Let $T$ be a tree contained in $S$ such that $\size{r(T)} = n$.
    Since $T$ is a subtree of $S$, the children of $r(T)$ are indexed by
    subtree isomorphism mapping from $V(T)$ to $V(S)$.
    A tree $T$ is a common tree of $\mathcal T(\mathcal H)$ if and only if
    $\size{B_j(T)} \le \size{E_j} - 1$ for any $j$,
    where $B_j(T) = \inset{i\in E_j}{T(c'_i) \not\subtreeiso \bar P_i}$ and $c'_i$ is the child of $c_i \in r(T)$.
\end{lemma}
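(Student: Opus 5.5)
Both directions go through an explicit structural description of how the root children of $T$ embed into $T_j$. Since $T$ is contained in $S$ and $r(T)$ has $n$ children, each child $c_i$ has at most one child $c'_i$. Moreover $T(c'_i) \subtreeiso P_i$. Every root child of $T_j$ is either a vertex child $d^j_k$, whose subtree is a path-edge followed by $P_k$ or $\bar P_k$, or a dummy child, whose subtree is $R_j$. An embedding of $T$ into $T_j$ maps the root to the root and the root children injectively to root children. So the question reduces to a bipartite matching between $\set{c_1,\ldots,c_n}$ and the $n+\size{E_j}-1$ root children of $T_j$, where $c_i$ may go to $d^j_k$ if and only if $T(c_i) \subtreeiso T_j(d^j_k)$.

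The first step is to determine these compatibilities, using the height profile and the incomparability of the $P_i$'s. If $c'_i$ exists, $T(c_i)$ has height at least $2$ below $c_i$, i.e.\ a vertex at depth $3$ in $T_j$ terms. Then $T(c_i)$ embeds into $T_j(d^j_k)$ only if $T(c'_i) \subtreeiso P_k$ (or $\bar P_k$). It embeds into $R_j$ only through some $d^j_k$ of $R_j$ with $T(c'_i)$ realised at the appropriate depth; I plan to handle this by checking depths. The first attempt is to argue that the relevant matchable structure is $c'_i \mapsto$ a root of a copy of $P_k$ or $\bar P_k$. Since $c'_i$ has children each with $2(n-i)+2$ leaves (or fewer), degree counting will show that $T(c'_i)$ can be placed under the vertex child $d^j_i$ with the same index. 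If $i \notin E_j$ the target is $P_i$, which always works. If $i\in E_j$ the target is $\bar P_i$, which works exactly when $i \notin B_j(T)$. The dummy children provide a universal slot. When $c'_i$ is absent, $c_i$ is a leaf and can go anywhere.

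For the ``if'' direction, I map each $c_i$ with $i\notin B_j(T)$ to $d^j_i$, and each $c_i$ with $i\in B_j(T)$ to a distinct dummy child. This needs $B_j(T)\subseteq E_j$, which holds by definition, and a proof that $T(c'_i)\subtreeiso R_j$ realised at the right height. Here $T(c'_i)\subtreeiso P_i \subtreeiso R_j(d^j_i)$ for $i\in E_j$; I must align depths carefully, mapping $c_i$ to the dummy root and $c'_i$ to $R_j$'s child $d^j_i$. There are at most $\size{E_j}-1$ such $i$, so injectivity holds. Doing this for all $j$, together with $T\subtreeiso S$, gives that $T$ is common.

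For the ``only if'' direction, I fix $j$ and an embedding into $T_j$, and argue by pigeonhole as in \Cref{lem:common-iff-independent}. For $i\in B_j(T)$, $c_i$ cannot map to $d^j_i$, since that subtree is $\bar P_i$. I must also show it cannot map to any other vertex child $d^j_k$ with $k\ne i$. This is the main obstacle: $T(c'_i)$ is an arbitrary subtree of $P_i$ not contained in $\bar P_i$, so it must have a child with the full $2(n-i)+2$ leaves. I will combine this with the counting used to show that $P_i$ and $P_k$ are incomparable, and check whether $T(c'_i)\subtreeiso P_k$ could happen for $k<i$. If it could, I would strengthen the argument via a global counting or Hall-type argument: every $c_k$ with $k \notin B_j(T)$ also needs a slot. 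Vertex children indexed by $k\notin E_j$ are only usable by a matching index, so the $B_j(T)$ children are forced into dummy slots, giving $\size{B_j(T)}\le\size{E_j}-1$.
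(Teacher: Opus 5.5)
\textbf{Verdict: the proposal has a genuine gap in the ``only if'' direction.} Your ``if'' direction is correct: send $c_i$ with $i\notin B_j(T)$ to $d^j_i$, and send each $c_i$ with $i\in B_j(T)$ to a distinct dummy root, with $c'_i$ going to the $P_i$-child of $R_j$. This is exactly what the paper dismisses as ``easy''.

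The gap sits at the step you flag as the main obstacle and then leave open. Your description of $T(c'_i)$ for $i\in B_j(T)$ (``it must have a child with the full $2(n-i)+2$ leaves'') is too weak to close the argument. A tree known only to have such a child does embed into $P_k$ for every $k<i$, since the children of $P_k$ carry $2(n-k)+2>2(n-i)+2$ leaves. So from that information alone you cannot exclude $c_i\mapsto d^j_k$.

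Your fallback Hall-type argument does not repair this. Its premise, that vertex children indexed by $k\notin E_j$ are usable only by $c_k$, is false: a leaf $c_k$, or one with a small $T(c'_k)$, fits under any root child of $T_j$. Moreover, $T_j$ has $n+\size{E_j}-1$ root slots for only $n$ children. So no counting over the other children can stop a $B_j(T)$-child from occupying a vertex slot, if that were locally possible.

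The missing idea is the rigidity property (\Cref{obs:rigidity}): if $Q\subtreeiso P_i$ and $Q\not\subtreeiso\bar P_i$, then $Q\treeiso P_i$. Suppose the root of $Q$ had fewer than $i$ children, or some child had fewer than $2(n-i)+2$ leaves. Then you could send that child (or no child) to the deficient child of $\bar P_i$ and the rest to full children, giving $Q\subtreeiso\bar P_i$. Hence $i\in B_j(T)$ means exactly $T(c'_i)\treeiso P_i$.

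With this, the local argument closes:
\begin{itemize}
\item $P_i\subtreeiso P_k$ forces $i\le k$ by root degree and $2(n-i)+2\le 2(n-k)+2$ by leaf counts, so $k=i$.
\item Since $\bar P_k\subtreeiso P_k$ and $P_i\not\subtreeiso\bar P_i$, the tree $P_i$ embeds under no vertex child $d^j_k$.
\item For $k\neq i$ this follows from the first point. For $k=i$, the target is $\bar P_i$ because $i\in E_j$.
\end{itemize}
So every $c_i$ with $i\in B_j(T)$ must be mapped to a dummy child. Injectivity of the embedding then gives $\size{B_j(T)}\le\size{E_j}-1$, which is the paper's proof.
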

\begin{proof}
    If $\size{B_j(T)} \le \size{E_j} - 1$.
    It is easy to obtain a subtree isomorphism mapping using dummy children.

    Suppose that $T$ is a common tree of $\mathcal T(\mathcal H)$.
    Let $i$ be an integer in $B_j(T)$.
    Since $T(c'_i) \not\subtreeiso \bar P_i$ and
    $T(c'_i) \subtreeiso P_i$,
    $T(c'_i) \treeiso P_i$ from \Cref{obs:rigidity}.
    Let $\varphi_j$ be a subtree isomorphism mapping from
    $V(T)$ to $V(T_j)$.
    Since $T(c'_i) \treeiso P_i$,
    $\varphi_j(c_i)$ is a dummy child.
    The number of such children in $ch(r(T))$ is bounded by
    the number of dummy children, that is, $\size{E_j} - 1$,
    $\size{B_j(T)} \le \size{E_j} - 1$.
\end{proof}

\begin{lemma}\label{lem:normal-form}
    Let $T$ be a maximal common tree of $\mathcal T(\mathcal H)$ such that
    $r(T)$ has exactly $n$ children.
    Then, $T(c'_i)\treeiso P_i$ or $T(c'_i) \treeiso \bar P_i$
    for every $i \in V$.
\end{lemma}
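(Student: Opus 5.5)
The plan is to argue by contradiction using the capacity characterization in \Cref{lem:dummy-capacity} together with the rigidity in \Cref{obs:rigidity}. Fix a maximal common tree $T$ whose root has exactly $n$ children, indexed through a subtree isomorphism mapping into $S$ as above. Since $T \subtreeiso S$, each child $c_i$ has at most the unique child $c'_i$, and $T(c'_i) \subtreeiso P_i$. (If some $c_i$ had no child $c'_i$, we would treat $T(c'_i)$ as empty; the modification below still applies, since attaching a copy of $\bar P_i$ under $c_i$ is an extension of the same kind.)

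Suppose that for some $i$, $T(c'_i) \not\treeiso P_i$ and $T(c'_i) \not\treeiso \bar P_i$. By \Cref{obs:rigidity}, applied contrapositively with $T(c'_i) \subtreeiso P_i$, we get $T(c'_i) \prec \bar P_i$. In particular $T(c'_i) \subtreeiso \bar P_i$, so $i \notin B_j(T)$ for every $j$.

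Now define $T'$ from $T$ by replacing the subtree at $c'_i$ (or attaching a new child under $c_i$) with a copy of $\bar P_i$. Then the following hold.
\begin{itemize}
\item $T'$ is still contained in $S$, because $\bar P_i \subtreeiso P_i \treeiso S(s'_i)$.
\item Its root still has $n$ children with the same indexing.
\item $B_j(T') = B_j(T)$ for every $j$. Position $i$ still satisfies $T'(c'_i) \subtreeiso \bar P_i$, and all other positions are unchanged.
\end{itemize}
Hence $\size{B_j(T')} \le \size{E_j}-1$ for all $j$. Applying \Cref{lem:dummy-capacity} in both directions (first to $T$ to get the bound, then to $T'$ to conclude), $T'$ is a common tree.

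Finally, $T \prec T'$. Indeed, $T(c'_i) \prec \bar P_i$ gives an embedding of $T$ into $T'$ that is the identity off the $i$-th branch. The two are not isomorphic, because $T'$ has strictly more vertices: the proper containment $T(c'_i) \prec \bar P_i$ means $\bar P_i$ has more vertices than $T(c'_i)$. This contradicts the maximality of $T$.

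The main subtlety I expect is justifying that the local replacement yields a global proper extension and preserves the indexing used in \Cref{lem:dummy-capacity}. I would handle it by comparing vertex counts, noting that $\mathcal F$-type trees differ only inside the $i$-th branch. A second point to state carefully is that membership in $B_j$ depends only on $T(c'_i)$ relative to $\bar P_i$, which the replacement keeps inside $\bar P_i$.
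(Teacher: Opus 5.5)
Your proposal is correct and follows the paper's proof essentially step for step: you use \Cref{obs:rigidity} to get $T(c'_i)\prec\bar P_i$, replace that branch by a copy of $\bar P_i$, note that $B_j$ is unchanged so that \Cref{lem:dummy-capacity} makes the enlarged tree common, and conclude that this contradicts maximality. Your two extra points fill in details the paper leaves implicit. The first is the vertex-count argument showing the containment is proper. The second is the case where $c_i$ has no child, in which you attach $\bar P_i$ under $c_i$.
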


\begin{proof}
    By assumption, $r(T)$ has $n$ children.
    Since $T \subtreeiso S$, $T(c'_i) \subtreeiso P_i$ for each $c'_i$.
    If $T$ does not satisfy this lemma,
    there is a child $c_i$ such that $T(c'_i) \prec \bar P_i$.
    Let $T^+$ be the tree obtained from $T$ by replacing $T(c'_i)$  with $\bar P_i$.
    Then $T\prec T^+$ and $T^+ \preceq S$.

    We show that $T^+$ is a common tree.
    Let $T_j$ be a tree in $\mathcal T(\mathcal H)\setminus\set{S}$.
    If $i\in E_j$, then $i$ is not contained in $B_j(T^+)$ since
    $\bar P_i\preceq \bar P_i$.
    If $i\notin E_j$, then $i$ is not $B_j(T^+)$.
    Hence $B_j(T^+)=B_j(T)$.
    Since $T$ is common, \Cref{lem:dummy-capacity} gives
    $\size{B_j(T)}\le \size{E_j}-1$, and therefore
    $\size{B_j(T^+)}\le \size{E_j}-1$.
    By \Cref{lem:dummy-capacity}, $T^+\preceq T_j$.
    Since this holds for every $j$, $T^+$ is a common tree, contradicting the
    maximality of $T$.

    Therefore, for each $c'_i$ in $T$,
    $T(c'_i) \subtreeiso P_i$ and $T(c'_i) \not \prec \bar P_i$.
    From \Cref{obs:rigidity},
    $T(c'_i) \treeiso \bar P_i$ or
    $T(c'_i) \treeiso P_i$.
\end{proof}

For a maximal common tree $T$, we define
$\mathcal F^{-1}(T) = U$, where $U$ is the unique set satisfying $T \treeiso \mathcal F(U)$ from \Cref{lem:normal-form}.
Equivalently, $\mathcal F^{-1}(T) = \inset{i\in V}{T \text{ has a vertex } c'_i \text{ such that } T(c'_i) \treeiso P_i}$.

\begin{lemma}\label{lem:MCT-to-MIS}
    If $T$ is a maximal common tree of $\mathcal T(\mathcal H)$, then
    $\mathcal F^{-1}(T)$ is a maximal independent set of $\mathcal H$.
\end{lemma}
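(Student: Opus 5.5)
Let $T$ be a maximal common tree of $\mathcal T(\mathcal H)$ and set $U=\mathcal F^{-1}(T)$. The plan is to first put $T$ in normal form, then read off independence from \Cref{lem:common-iff-independent}, and finally obtain maximality by contradiction using \Cref{lem:common-iff-independent} again.

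\textbf{Normal form.} As noted before \Cref{lem:dummy-capacity}, maximality and $T \subtreeiso S$ force $r(T)$ to have exactly $n$ children, indexed via a subtree isomorphism mapping into $S$. We also need each $c_i$ to keep its unique child $c'_i$. If some $c_i$ were a leaf, then hanging $\bar P_i$ below it would give a strictly larger tree $T^+$ with $T^+ \subtreeiso S$. This $T^+$ is still common by \Cref{lem:dummy-capacity}, because $\bar P_i \subtreeiso \bar P_i$ means $B_j(T^+)=B_j(T)$; this contradicts maximality. So \Cref{lem:normal-form} applies: every $T(c'_i)$ is isomorphic to $P_i$ or to $\bar P_i$. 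Since $P_i \not\treeiso \bar P_i$, we get $T \treeiso \mathcal F(U)$, and $U$ is well defined.

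\textbf{Independence.} Since $T \treeiso \mathcal F(U)$ is common, \Cref{lem:common-iff-independent} gives that $U$ is an independent set of $\mathcal H$.

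\textbf{Maximality.} Suppose $U$ is not maximal. Then some $i \notin U$ makes $U \cup \set{i}$ independent. By \Cref{lem:common-iff-independent}, $\mathcal F(U \cup \set{i})$ is a common tree. Moreover $\mathcal F(U) \prec \mathcal F(U\cup\set{i})$: it is contained, since we only replace the copy of $\bar P_i$ below $c'_i$ by $P_i \succ \bar P_i$, and the two trees are not isomorphic, since they have different numbers of vertices. This contradicts the maximality of $T \treeiso \mathcal F(U)$. Hence $U$ is a maximal independent set.

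\textbf{Main obstacle.} The delicate point is the normal-form step, namely justifying that $r(T)$ has $n$ children each with a nonempty child $c'_i$ so that \Cref{lem:normal-form} applies. The extension argument above, via \Cref{lem:dummy-capacity}, handles this. The other steps are direct.
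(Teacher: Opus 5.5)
Your proof is correct and follows the paper's own argument. You reduce $T$ to the normal form $T\treeiso\mathcal F(U)$ via \Cref{lem:normal-form}, get independence from \Cref{lem:common-iff-independent}, and refute non-maximality by exhibiting the common tree $\mathcal F(U\cup\set{i})\succ\mathcal F(U)$. Your extra step excluding leaf children $c_i$ (hang $\bar P_i$ below $c_i$ and apply \Cref{lem:dummy-capacity}, reading a missing $c'_i$ as $i\notin B_j(T)$) makes explicit an assumption that \Cref{lem:normal-form} uses without stating it.
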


\begin{proof}
    Let $U = \mathcal F^{-1}(T)$.
    From \Cref{lem:normal-form}, $T \treeiso \mathcal F(U)$.
    Since $T$ is a common tree, $\mathcal F(U)$ is a common tree.
    From \Cref{lem:common-iff-independent}, $U$ is an independent set of $\mathcal H$.

    Suppose, for contradiction, that $U$ is not maximal.
    Then there exists $i \in V\setminus U$ such that $ U \cup\set{i}$ is independent.
    From \Cref{lem:common-iff-independent},
    $\mathcal F(U\cup\set{i})$ is a common tree.
    Moreover,
    $\mathcal F(U)\prec \mathcal F(U\cup\set{i})$,
    since $\mathcal F(U \cup \set{i})$ is obtained from $\mathcal F(U)$ by
    replacing the subtree $\bar P_i$ with $P_i$.
    This contradicts the maximality of $T\treeiso \mathcal F(U)$.
    Therefore, $U$ is maximal.
\end{proof}

Combining \Cref{lem:MCT-to-MIS,lem:MIS-to-MCT},
it can be shown that $\mathcal F$ is a bijection between
$\mathit{mct}(\mathcal T(\mathcal H))$ and $\mis{\mathcal H}$,
and we obtain the following theorem.

\begin{theorem}
    If \textup{\textsc{Maximal Common Tree Mining}} for rooted unordered trees of height at
    most $4$ can be solved in output-polynomial time, then
    \textup{\textsc{Dualization}} can be solved in output-polynomial time.
\end{theorem}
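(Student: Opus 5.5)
Given a hypergraph $\mathcal H=(V,\mathcal E)$ with $n$ vertices and $m$ hyperedges, we build the tree set $\mathcal T(\mathcal H)=\set{S,T_1,\ldots,T_m}$ described above. We then run the hypothesised output-polynomial algorithm $\mathcal A$ for \textsc{Maximal Common Tree Mining} on it and translate each output tree $T$ into $\mathcal F^{-1}(T)$. The argument has three ingredients: the reduction is polynomial, the answer sets are in bijection, and the running times transfer.

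\textbf{The construction is polynomial.} Each gadget $P_i$ has $1+i+i(2(n-i)+2)=O(n^2)$ vertices. Hence $S$ has $O(n^3)$ vertices. Each $T_j$ has $n$ vertex children carrying copies of $P_i$ or $\bar P_i$, plus $\size{E_j}-1$ dummy children each carrying a copy of $R_j$ of size $O(n^3)$. So $T_j$ has $O(n^4)$ vertices, and $\mathcal T(\mathcal H)$ has total size polynomial in $n+m$ and is built in that time. All trees have height at most $4$, as already noted.

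\textbf{Bijection of answer sets.} By \Cref{lem:MIS-to-MCT}, $\mathcal F$ maps maximal independent sets to maximal common trees. By \Cref{lem:MCT-to-MIS}, $\mathcal F^{-1}$ maps maximal common trees back to maximal independent sets. Two points need checking here.
\begin{itemize}
\item $\mathcal F$ is injective up to isomorphism. If $U\neq U'$, then $\mathcal F(U)\not\treeiso\mathcal F(U')$, because the multiset of grandchild subtrees differs and the $P_i$, $\bar P_i$ are pairwise non-isomorphic. The non-isomorphism follows from their distinct root degrees and the leaf counts below the children.
\item $\mathcal F(\mathcal F^{-1}(T))\treeiso T$ for every maximal common tree $T$. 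This follows from \Cref{lem:normal-form}, once we confirm that a maximal common tree has exactly $n$ root children. That holds because adding a root child holding a single leaf or path below $s_i$ remains common, so maximality forces all $n$ children.
\end{itemize}
Hence $\size{\mathit{mct}(\mathcal T(\mathcal H))}=\size{\mis{\mathcal H}}$, and the size of each output tree is polynomial in $n$.

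\textbf{Transferring the running time.} $\mathcal A$ runs in time polynomial in $\size{\mathcal T(\mathcal H)}+\size{\mis{\mathcal H}}$. For each output $T$, computing $\mathcal F^{-1}(T)$ is polynomial: inspect the $n$ grandchild subtrees and test isomorphism with $P_i$, which for rooted unordered trees is done by AHU canonical forms. Maximal independent sets of $\mathcal H$ correspond to complements of minimal transversals. Therefore the whole procedure solves \textsc{Dualization} in output-polynomial time.

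The main obstacle is justifying the injectivity of $\mathcal F$ and the $n$-children claim for maximal common trees, since \Cref{lem:normal-form} presupposes the latter. I would discharge the $n$-children claim by a direct extension argument. Suppose $T$ has fewer than $n$ root children. Use the root-child count of $S$ and the fact that every $T_j$ has at least $n$ root children, each with a child, to attach a new path of length one under an unused $s_i$. The resulting tree is still common, contradicting maximality.
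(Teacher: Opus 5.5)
Your proposal is correct and follows the paper's argument: it uses the same tree family $\mathcal T(\mathcal H)$, and the bijection between $\mis{\mathcal H}$ and the maximal common trees comes from \Cref{lem:MIS-to-MCT,lem:normal-form,lem:MCT-to-MIS}. The points you add are not a different route but details the paper leaves implicit: the polynomial size of the construction, injectivity of $\mathcal F$ up to isomorphism, the reason a maximal common tree has exactly $n$ root children (each $T_j$ has $n+\size{E_j}-1\ge n$ root children), and the complement correspondence with minimal transversals.
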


\subsection{Dual-Hardness of the Ordered Case}
We next address the ordered case.
We also show the Dual-hardness of \textsc{Maximal Common Tree Mining} even if the height of each tree is at most $2$.
To this end,
we provide a reduction from a hypergraph $\mathcal H$ to a set of trees $\mathcal T(\mathcal H)$
such that there is an ``almost'' bijection between $\mis{\mathcal H}$ and $\mathit{mct}(\mathcal T(\mathcal H))$.
An almost bijection means that a bijection from
$\mis{\mathcal H}$ to $\mathit{mct}(\mathcal T(\mathcal H))$
can be constructed, except for a constant number of maximal common trees.

In our reduction, we assume that $\mathcal H = (V, \mathcal E)$ has no vertices that are contained in all hyperedges.
If such a vertex $v$ exists, $\mis{\mathcal H}$ can be partitioned into $V \setminus \set{v}$ and $\mis{\mathcal H - v}$, where $\mathcal H - v = (V \setminus \set{v}, \inset{E \setminus \set{v}}{E \in \mathcal E})$.
Thus, the assumption can be adopted without loss of generality.
Since $\mathcal H$ has no such vertices, the size of a maximum independent set is less than $n-1$.
Furthermore, we identify each element of $V$ with an integer from $1$ to $n$.

We construct $m + 1$ trees $S, T_1, \ldots, T_{m}$ as follows.
The root $r(S)$ has $n$ children $(v_1, \ldots, v_{n})$.
Each child $v_i$ has one leaf as a child.
For each $E_i = \set{w_1, \ldots, w_{\size{E_i}}}$,
we define $T_i$ as follows.
Without loss of generality, we assume that $E_i$ is sorted.
The root $r(T_i)$ has $n + \size{E_i} - 1$ children $(v_1, \ldots, v_{n + \size{E_i} - 1})$.
The $i$-th child $v_i$ is a leaf if and only if $i = w_j + j - 1$ for some $1 \le j \le \size{E_i}$.
Otherwise, $v_i$ has a unique child $u_i$.
Hereafter, for $U \subseteq V$, we denote the tree obtained by the above construction as $T(U)$.
For $\mathcal H = (V, \mathcal E)$, we denote $\set{S, T_1, \ldots, T_m}$ as $\mathcal T(\mathcal H)$.

From the above construction, it is easy to show that each tree $T_i$ contains a tree $W$ such that
$r(W)$ has $n - 1$ children, with each child having a unique leaf.
Therefore, the following lemma holds.

\begin{lemmarep}\label{lem:mct:n-1}
    Let $W$ be a tree such that $r(W)$ has $n - 1$ children and each child has a unique leaf.
    If $\mathcal H$ has no independent sets with cardinality $n - 1$,
    then $W \in \mathit{mct}(\mathcal T(\mathcal H))$.
\end{lemmarep}
\begin{toappendix}
    \begin{proof}
        From the construction of each tree $T \in \mathcal T(\mathcal H)$,
        $W$ is a common tree of $\mathcal T(\mathcal H)$.
        Therefore, we only show the maximality of $W$.
        Suppose there is a tree $W'$ that contains $W$.
        Since $W'$ is contained in $S$, $r(W')$ has $n$ children.
        Moreover, if all children have a leaf, it contradicts that $W'$ is contained in $T_i \in \mathcal T(\mathcal H)$ for $1\le i \le m$
        since $r(T_i)$ has exactly $n-1$ vertices that have a child.
        Therefore, $r(W')$ has a leaf vertex $v$ as a child.

        Suppose that $v$ is the $j$-th child of $r(W')$.
        Since $W'$ is contained in $T_i \in \mathcal T(\mathcal H)$,
        there is an injection $\varphi: V(W')\to V(T_i)$.
        Since the root of $T_i$ has exactly $n-1$ vertices that have a child,
        for each $u \in ch(r(W')) \setminus \set{v}$, $\varphi(u)$ is uniquely determined.
        Therefore, if $j \notin E_i$, $W'$ is not subtree isomorphic to $T_i$.
        It implies that any $E_i \in \mathcal E$ contains $j$, and
        it contradicts the fact that $\mathcal H$ has no independent set with cardinality $n - 1$.
    \end{proof}
\end{toappendix}

From the above lemma, the root of every maximal common tree has $n$ children except for $W$.
Therefore, we can define the function $\mathcal{F}$ from the subtrees of $S$ such that the root has $n$ children to $2^V$.
For each subtree $T$ of $S$, $\mathcal{F}(T)$ contains $i$ if and only if the $i$-th child of $T$ is a non-leaf.
Here, $\mathcal{F}$ is a bijection, and we denote the tree defined by $U \subseteq V$ as $\mathcal{F}^{-1}(U)$.

\begin{lemmarep}\label{lem:non-subtree-iso}
    For a set of vertices $U$ and $U'$, $U\subseteq U'$ if and only if $\mathcal{F}^{-1}(U') \not\subtreeiso T(U)$.
\end{lemmarep}
\begin{toappendix}
    \begin{proof}
        ($\Rightarrow$) We prove the contrapositive.
        Let $\varphi$ be a subtree isomorphism mapping from $V(\mathcal{F}^{-1}(U'))$ to $V(T(U))$.
        Since $r(T(U))$ has at most $n - 1$ non-leaf children,
        $\mathcal{F}^{-1}(U')$ has a leaf $v_i$ such that $\varphi(v_i)$ is a leaf.
        From the construction of $\mathcal{F}^{-1}(U')$, $v_i \not\in U'$.
        However, $U$ contains $v_i$ from the construction of $T(U)$.
        Therefore, $U \not\subset U'$.

        ($\Leftarrow$) We prove the contrapositive. Suppose that $U \not\subseteq U'$.
        Let $U$ is ordered by $(u_1, \ldots, ,u_k)$, and choose $u_q\in U\setminus U'$.
        We construct a subtree isomorphism mapping $\varphi$ from $\mathcal{F}^{-1}(U')$ to $T(U)$.
        By the construction of $T(U)$, the $(u_q+q-1)$-th child of the root of
        $T(U)$ is a leaf. Since $u_q\notin U'$, the $u_q$-th child of the root of
        $\mathcal{F}^{-1}(U')$ is also a leaf. We map the latter child to the former
        one.
        We map all remaining children of the root of $\mathcal{F}^{-1}(U')$, in
        order, to the non-leaf children of the root of $T(U)$. This is well-defined:
        the root of $T(U)$ has $|V|+k-1$ children, exactly $k$ of which are leaves,
        and hence it has $|V|-1$ non-leaf children. Moreover, this mapping preserves
        the order. Indeed, before the $(u_q+q-1)$-th child of the root of $T(U)$,
        there are exactly $u_q-1$ non-leaf children, and after it there are exactly
        $|V|-u_q$ non-leaf children.
        Finally, for each non-leaf child $w$ of $r(\mathcal{F}^{-1}(U'))$,
        we map its unique leaf child to the unique leaf child of $\varphi(w)$.
        Therefore, we obtain a subtree isomorphism from $\mathcal{F}^{-1}(U')$ to
        $T(U)$.
    \end{proof}
\end{toappendix}

\begin{lemmarep}\label{lem:surjective}
    Let $I$ be a maximal independent set of $\mathcal H$.
    Then, $\mathcal{F}^{-1}(I)$ is a maximal common tree of $\mathcal T(\mathcal H)$.
\end{lemmarep}
\begin{toappendix}
    \begin{proof}
        We first show that $\mathcal{F}^{-1}(I)$ is a common tree of $\mathcal T(\mathcal H)$.
        Suppose that $\mathcal{F}^{-1}(I)$ is not a common tree.
        It implies that $\mathcal{F}^{-1}(I) \not\subtreeiso T_i$ for some $E_i \in \mathcal E$.
        From \Cref{lem:non-subtree-iso}, $I$ is not an independent set since $I \supseteq E_i$.

        We next show that $\mathcal{F}^{-1}(I)$ is maximal.
        Suppose that $\mathcal{F}^{-1}(I)$ is non maximal.
        Let $T'$ be a maximal common tree containing $\mathcal{F}^{-1}(I)$.
        From the definition of $\mathcal{F}$,
        $\mathcal{F}(T') \supset I$.
        Since $I$ is a maximal independent set of $\mathcal H$, $\mathcal H$ has a hyperedge $E$ such that $E \subseteq \mathcal{F}(T')$.
        From \Cref{lem:non-subtree-iso},
        $\mathcal F^{-1}(\mathcal{F}(T')) \not\subtreeiso T(E)$, and
        it contradicts the fact that $T'$ is a common tree of $\mathcal T(\mathcal H)$.
    \end{proof}
\end{toappendix}

\begin{lemmarep}\label{lem:tree-to-mis}
    Let $T \not\treeiso W$ be a maximal common tree of $\mathcal T(\mathcal H)$.
    Then, $\mathcal{F}(T)$ is a maximal independent set of $\mathcal H$.
\end{lemmarep}
\begin{toappendix}
    \begin{proof}
        Suppose that $\mathcal{F}(T)$ is a non independent set, that is,
        $\mathcal H$ has $E \subseteq \mathcal{F}(T)$.
        From \Cref{lem:non-subtree-iso}, $\mathcal{F}^{-1}(E) \not\subtreeiso T(E)$,
        it contradicts the fact that $T$ is a common tree of $\mathcal T(\mathcal H)$
        since $\mathcal{F}^{-1}(E)$ is contained in $T$.

        Suppose that $\mathcal{F}(T)$ is non maximal.
        Let $I'$ be a maximal independent set of $\mathcal H$ containing $\mathcal F(T)$.
        From \Cref{lem:surjective}, $\mathcal{F}^{-1}(I')$ is a common tree of $\mathcal T(\mathcal H)$.
        From the definition of $\mathcal{F}$, $T \subtreeiso \mathcal{F}^{-1}(I')$.
        It contradicts the maximality of $T$.
    \end{proof}
\end{toappendix}

From \Cref{lem:tree-to-mis,lem:surjective},
$\mathcal{F}$ is a bijection between $\mis{\mathcal H}$ and the set of maximal common trees of $\mathcal T(\mathcal H)$.
Moreover, for a tree $T$, we can obtain $\mathcal{F}(T)$ in polynomial time.
Therefore, the following theorem holds.

\begin{theorem}
    If \textup{\textsc{Maximal Common Tree Mining}} for the rooted ordered trees of height at most $2$ can be solved in output-polynomial time,
    then \textup{\textsc{Dualization}} can be solved in output-polynomial time.
\end{theorem}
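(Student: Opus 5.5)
We reduce \textsc{Dualization}, in its form of enumerating $\mis{\mathcal H}$ for a hypergraph $\mathcal H=(V,\mathcal E)$, to \textsc{Maximal Common Tree Mining} on the ordered height-$2$ instance $\mathcal T(\mathcal H)=\set{S,T_1,\ldots,T_m}$ built above.

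\emph{Preprocessing.} Any vertex $v$ lying in every hyperedge is removed as described earlier. This yields the set $V\setminus\set{v}$ together with $\mis{\mathcal H - v}$, and is applied repeatedly at polynomial cost. We also check directly, in polynomial time, whether $\mathcal H$ has an independent set of size $n-1$: there are only $n$ candidates $V\setminus\set{i}$, each is tested against every hyperedge, and the independent ones are output. Every tree in $\mathcal T(\mathcal H)$ has height at most $2$, and the instance has size polynomial in $n+m$, since each $T_i$ has $O(n+\size{E_i})$ vertices.

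\emph{Enumeration.} We run the assumed output-polynomial algorithm on $\mathcal T(\mathcal H)$. For each maximal common tree $T$ it outputs, we proceed as follows:
\begin{itemize}
\item If $T\treeiso W$, we discard it. By \Cref{lem:mct:n-1}, this is the single possible extra solution when no independent set of size $n-1$ exists.
\item If $r(T)$ has $n$ children, we output $\mathcal F(T)$, which takes linear time.
\item If $r(T)$ has fewer than $n$ children and $T\not\treeiso W$, we handle it separately. This case can occur only when $\mathcal H$ has an independent $(n-1)$-set. Those sets were already produced by the direct check, so such trees are simply skipped.
\end{itemize}

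\emph{Correctness.} By \Cref{lem:surjective}, every $I\in\mis{\mathcal H}$ appears as $\mathcal F^{-1}(I)$. By \Cref{lem:tree-to-mis}, every non-$W$ tree with $n$ root children maps to a maximal independent set. Since $\mathcal F$ is a bijection on subtrees of $S$ whose root has $n$ children, there are no duplicates.

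\emph{Running time.} The number of maximal common trees is at most $\size{\mis{\mathcal H}}+1$, up to the small exceptional family. Hence the total running time is polynomial in $\size{\mathcal H}+\size{\mis{\mathcal H}}$.

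The main obstacle is the exceptional trees. Lemmas \ref{lem:surjective} and \ref{lem:tree-to-mis} tacitly assume that maximal common trees other than $W$ have $n$ root children, and \Cref{lem:mct:n-1} only guarantees that under its hypothesis. We therefore still have to confirm that the count of maximal common trees is at most $\size{\mis{\mathcal H}}+O(1)$, or at least polynomially bounded in it, so that output-polynomiality transfers.

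For this, we first argue that any common tree with fewer than $n$ root children extends either to $W$ or to some $\mathcal F^{-1}(U)$. Every $T_i$ has $n-1$ non-leaf root children, so a common tree with at most $n-1$ root children embeds into $W$ unless it has a leaf child positioned to require an $n$-th child. Where the preprocessing is needed, it guarantees that we are in the hypothesis of \Cref{lem:mct:n-1}. In the remaining case, the $n-1$-sized independent sets are few (at most $n$), which bounds the extra outputs by $O(n)$.
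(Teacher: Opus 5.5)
Your reduction and post-processing are the paper's: the same instance $\mathcal T(\mathcal H)$, discard $W$, output $\mathcal F(T)$, and correctness via \Cref{lem:surjective,lem:tree-to-mis}. The gap is in the step you yourself single out as the main obstacle. Output-polynomiality transfers only if the number of maximal common trees is $\size{\mis{\mathcal H}}+O(1)$, and your argument for this does not work. Your claim that a common tree with at most $n-1$ root children embeds into $W$ ``unless it has a leaf child positioned to require an $n$-th child'' invents an exception that does not exist. The conclusion that having at most $n$ independent $(n-1)$-sets ``bounds the extra outputs by $O(n)$'' is a non sequitur, because nothing you prove ties the number of exceptional maximal common trees to the number of such sets. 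Your third bullet (``this case can occur only when $\mathcal H$ has an independent $(n-1)$-set'') is likewise unproved. As written, the running-time bound is therefore not established.

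The missing fact is elementary, and it removes the exceptional case entirely. A common tree $T$ satisfies $T\subtreeiso S$, so its root has $k\le n$ children, each with at most one leaf child. If $k\le n-1$, map the $t$-th root child of $T$ (and its leaf, if present) to the $t$-th root child of $W$ (and its leaf). This is an order-preserving embedding, because every root child of $W$ has a leaf, and a childless root child of $T$ may be mapped to a root child of $W$ that has a child. Now, $W$ is always common: it embeds into $S$ and into the $n-1$ non-leaf root children of each $T_i$. Hence every maximal common tree with fewer than $n$ root children is isomorphic to $W$. Every other maximal common tree therefore has exactly $n$ root children and equals $\mathcal F^{-1}(I)$ for some $I\in\mis{\mathcal H}$, so $\size{\mathit{mct}(\mathcal T(\mathcal H))}\le\size{\mis{\mathcal H}}+1$. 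This is the fact the paper relies on when it asserts that all maximal common trees except $W$ have $n$ root children. The bound does not even need the preprocessing, which only serves to make $W$ actually maximal. Consequently your third bullet is vacuous. Your direct search for independent $(n-1)$-sets is also redundant: $V\setminus\set{j}$ is independent exactly when $j$ lies in every hyperedge, so after removing such vertices there are none. Running the search on the original hypergraph would only duplicate the sets $V\setminus\set{v}$ already output by the preprocessing.
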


\section{Closed Frequent Tree Mining}\label{sec:frequent-subtree-mining}
We present a polynomial-delay and polynomial-space algorithm for
\textsc{Closed Frequent Tree Mining} in the unordered case
if each tree in $\mathcal T$ has height of at most $2$.

Before describing the details of our algorithm,
we show that \textsc{Closed Frequent Tree Mining} can be regarded as
the problem of enumerating a subset of $\mathcal T$.
In what follows, we denote the unique maximal common tree of $\mathcal T$ as $\mathit{mct}(\mathcal T)$.
Moreover, for $\mathcal T$ and a tree $T$, we denote the set of trees in $\mathcal T$ that contains $T$ as $\mathcal T(T)$.

\begin{lemma}\label{thm:bipartite}
    Let $\mathcal T$ be  a set of rooted trees with height of at most $2$.
    Then, a tree $T$ is a closed tree of $\mathcal T$ if and only if
    $T \treeiso T_{\mathit{mct}}$, where $T_{\mathit{mct}} \treeiso
    \mathit{mct}(\mathcal T(T))$.
\end{lemma}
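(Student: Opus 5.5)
The plan is to prove the two directions separately, using \Cref{thm:uniqueness-of-common-trees-of-n-trees} applied to the subfamily $\mathcal T(T)$. Note that $\mathcal T(T)$ is nonempty whenever the support of $T$ is positive. The degenerate case of support $0$ should be handled by convention: such a $T$ is not frequent for $\theta \ge 1$, and we would restrict attention to $T$ of positive support. Let $M = \mathit{mct}(\mathcal T(T))$. By \Cref{thm:uniqueness-of-common-trees-of-n-trees}, $M$ exists and every common tree of $\mathcal T(T)$ is contained in $M$. Since $T$ is contained in every tree of $\mathcal T(T)$, $T$ is a common tree of $\mathcal T(T)$, so $T \subtreeiso M$.

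The key intermediate fact is that $\mathcal T(M) = \mathcal T(T)$.
\begin{itemize}
\item Every $R \in \mathcal T(T)$ contains $M$, because $M$ is a common tree of $\mathcal T(T)$. Hence $\mathcal T(T) \subseteq \mathcal T(M)$.
\item If $R$ contains $M$, then $R$ contains $T$ by transitivity of $\subtreeiso$. Hence $\mathcal T(M) \subseteq \mathcal T(T)$.
\end{itemize}
Transitivity holds because subtree isomorphism mappings compose; we would note this briefly. Consequently the supports of $T$ and $M$ are equal.

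($\Rightarrow$) Suppose $T$ is closed. We have $T \subtreeiso M$, and the supports of $T$ and $M$ coincide. If $T \prec M$, closedness would require the support of $T$ to be strictly larger than that of $M$, a contradiction. Hence $T \treeiso M$.

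($\Leftarrow$) Suppose $T \treeiso M$. Take any $T'$ with $T \prec T'$. Since $T'$ contains $T$, every tree containing $T'$ also contains $T$, so $\mathcal T(T') \subseteq \mathcal T(T)$. If equality held, $T'$ would be a common tree of $\mathcal T(T)$. By \Cref{thm:uniqueness-of-common-trees-of-n-trees} this gives $T' \subtreeiso M \treeiso T$, and together with $T \subtreeiso T'$ we get $T \treeiso T'$, contradicting $T \prec T'$. So the inclusion is strict and the support of $T'$ is smaller, i.e., $T$ is closed.

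The main obstacle is minor bookkeeping rather than a real difficulty. We must make sure that $\treeiso$, defined as mutual containment, behaves like isomorphism; this holds for finite trees, since containment in both directions forces equal vertex counts. We must also state transitivity of $\subtreeiso$ explicitly. Everything else follows from the uniqueness, and the containment of all common trees, given by \Cref{thm:uniqueness-of-common-trees-of-n-trees}.
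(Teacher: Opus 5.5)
Your proof is correct and follows essentially the same route as the paper. For the forward direction you use that $M=\mathit{mct}(\mathcal T(T))$ contains $T$ with the same support, and for the backward direction that any $T'$ with $T\prec T'$ cannot be common to all of $\mathcal T(T)$, since every common tree of $\mathcal T(T)$ lies in $M\treeiso T$.

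You also make explicit steps the paper leaves implicit: the identity $\mathcal T(M)=\mathcal T(T)$ and the containment of all common trees in $M$. Your support-$0$ caveat is sound as well: $\mathit{mct}(\emptyset)$ is undefined, but such a $T$ is never closed anyway, because adding a leaf cannot lower the support.
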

\begin{proof}
    For a closed tree $T$,
    if $T \prec  T_\mathit{mct}$, then it contradicts the closedness of $T$.
    Thus, we show the other direction.

    Suppose that $T \treeiso T_\mathit{mct}$.
    From \cref{thm:uniqueness-of-common-trees-of-n-trees},
    $T_\mathit{mct}$ is the tree whose associated integers are the
    pointwise minimum of $T'$ over all $T' \in \mathcal T(T)$.
    Suppose that $T$ is non-closed.
    In this case, there is a rooted tree $T'$ such that $T'$ contains
    $T$ and $\mathcal T(T) = \mathcal T(T')$.
    However, since $T \treeiso T_{\mathit{mct}}$,
    $\mathcal T(T')$ is a proper subset of $\mathcal T(T)$. This
    contradicts the assumption that $T$ is non-closed, and hence $T$
    must be closed.
\end{proof}

From the above theorem, \textsc{Closed Frequent Tree Mining} can be regarded as
the problem of enumerating all subsets of $\mathcal T' \subseteq
\mathcal T$ satisfying $\mathcal T' = \mathcal T(\mathit{mct}(\mathcal T'))$.

In what follows, we present an algorithm for enumerating all such
subsets based on reverse search~\cite{Avis:1996}.
We provide an overview of reverse search.
In reverse search, we define a tree structure $\mathcal F$ rooted at $R$.
The set of vertices in $\mathcal F$ is the set of closed frequent
trees, and we assume that we can find $R$ in polynomial time.
We enumerate all closed frequent trees by traversing $\mathcal F$ from $R$.
We define the parent of a closed frequent tree $T$ for each non-root
$T \in V(\mathcal F) \setminus \set{R}$.
To traverse $\mathcal F$,
the definitions of the neighbors of $T$ are essential.
As a key point in the definition of the neighbors,
the neighbors of $T$ must be defined so that it contains all closed
frequent trees that have $T$ as their parent.
When we define the parent and the neighbors satisfying these conditions,
we can traverse $\mathcal F$ from the root.
More precisely, if the parent and the neighbors can be found in polynomial time,
then we can enumerate all closed frequent trees in polynomial delay.

Our algorithm does not enumerate all closed trees but only the closed
frequent trees for a threshold $\theta$.
Therefore, when defining the parent–child relation, we ensure that
the support of a parent is at least that of its child.
This guarantees the monotonicity of the tree structure, which in turn
makes it possible to enumerate all closed frequent trees.
Several papers have proposed techniques that use monotonicity along
parent–child
relationships~\cite{DBLP:journals/algorithmica/Uno10,DBLP:journals/iandc/KobayashiKW25,DBLP:conf/wg/KobayashiKW22}.

We define the root, the parent, and
the neighbors of each solution.
As the root of $\mathcal F$, we define $R = \mathit{mct}(\mathcal T)$.
From \Cref{thm:uniqueness-of-common-trees-of-n-trees}, it can be
found in polynomial time.
For a closed frequent tree $T \in V(\mathcal F) \setminus \set{R}$,
we define the parent of $T$ as follows.
Since $T$ is not the root $R$,
$\mathcal T \setminus \mathcal T(T) \neq \emptyset$.
We can find $\mathcal T(T)$ in polynomial time since subtree
isomorphism can be solved in polynomial time~\cite{MATULA197891}.
We choose $T'  \in \mathcal T \setminus \mathcal T(T)$ such that
$\mathit{mct}(\mathcal T(T) \cup \set{T'})$ is minimal with respect
to a relation $\subtreeiso$.
If several such trees $T'$ exist,
we select one according to a predetermined rule.
The parent of $T$ is defined by $\mathit{mct}(\mathcal T(T) \cup \set{T'})$.
In this definition, the support of the parent of $T$ is greater than
the support of $T$.
Thus, if $T$ is a closed frequent tree, then its parent is also a
closed frequent tree.
Moreover, by applying the parent relation at most $\size{\mathcal{T}}$ times,
any closed tree becomes $R$.
This means that this parent relation does not make a cycle on $\mathcal F$.

We define the neighbors of each closed frequent tree $T$.
In our definition, $T$ has at most $\size{V(T)}$ neighbors.
For a vertex $v \in V(T)$,
$T_v^+$ is a rooted tree obtained by adding a leaf $u$ as a child of $v$.
Since $\mathit{mct}(\mathcal T(T_v^+))$ is a closed tree,
we define it as a neighbor of $T$.

We show that all children of $T$ are contained in neighbors of $T$.
To this end, we show the following lemma.

\begin{lemma}\label{lem:adj}
    Let $T$ be a closed tree and $T_c$ be a child of $T$.
    Then, there is a vertex $v$ such that $\mathit{mct}(\mathcal
    T(T_v^+)) \treeiso T_c$.
\end{lemma}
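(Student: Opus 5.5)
The plan is to work with the degree-sequence description from the proof of \Cref{thm:uniqueness-of-common-trees-of-n-trees}. That description says $Q \subtreeiso R$ if and only if the $(-1)$-padded, non-increasing degree sequences satisfy $d^Q_i \le d^R_i$ for every $i$. Since $T_c$ is a child of $T$, we have $T = \mathit{mct}(\mathcal T(T_c) \cup \set{T'})$ for the chosen $T' \in \mathcal T \setminus \mathcal T(T_c)$. The $\mathit{mct}$ of a larger family is the pointwise minimum over more sequences, so $T \subtreeiso T_c$. Moreover $\mathcal T(T_c) \subsetneq \mathcal T(T)$ because $T' \in \mathcal T(T)$, hence $T \prec T_c$, and some index has $d^T_i < d^{T_c}_i$.

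\emph{Step 1: choose $v$.} Let $i$ be the first index with $d^T_i < d^{T_c}_i$. If $d^T_i = -1$, then $i = \ell_T + 1$; take $v = r(T)$, so that $T_v^+$ has a new root child of degree $0$ at position $i$, which is allowed since $d^{T_c}_i \ge 0$. Otherwise take $v$ to be the child of the root at position $i$; adding a leaf raises $d_i$ by one. The new sequence stays non-increasing: if $d^T_{i-1} = d^T_i$, then $d^{T_c}_{i-1} = d^T_{i-1} = d^T_i < d^{T_c}_i$, which contradicts that the sequence of $T_c$ is non-increasing. In both cases $T_v^+$ is pointwise bounded by $T_c$, so
\[ T \prec T_v^+ \subtreeiso T_c \quad\text{and}\quad \mathcal T(T_c) \subseteq \mathcal T(T_v^+) \subseteq \mathcal T(T). \]

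\emph{Step 2: show $\mathcal T(T_v^+) = \mathcal T(T_c)$.} Once this holds, \Cref{thm:bipartite} applied to the closed tree $T_c$ gives $\mathit{mct}(\mathcal T(T_v^+)) = \mathit{mct}(\mathcal T(T_c)) \treeiso T_c$, which is the claim. Suppose instead that some $R \in \mathcal T(T_v^+) \setminus \mathcal T(T_c)$ exists. Then $R$ is a candidate in the parent rule, and $T_v^+$ is a common tree of $\mathcal T(T_c) \cup \set{R}$. Hence $\mathit{mct}(\mathcal T(T_c) \cup \set{R}) \succeq T_v^+ \succ T$. The goal is to contradict the way $T'$ was selected.

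\emph{The main obstacle} is exactly this last comparison. The contradiction needs the parent to be the $\subtreeiso$-extremal candidate closest to $T_c$, namely one for which no other candidate $R$ gives an $\mathit{mct}(\mathcal T(T_c) \cup \set{R})$ strictly containing $T$. I will therefore read the selection rule ("minimal" with respect to $\subtreeiso$) as minimal support loss, i.e.\ maximal in $\subtreeiso$ among the $\mathit{mct}(\mathcal T(T_c) \cup \set{R})$. This reading is consistent with the requirement that the parent is reached by one step of enlarging the support. Under it, the existence of $R$ directly contradicts the choice of $T'$. If the rule is instead read literally as $\subtreeiso$-minimal, I would first try to show that all candidate values are pairwise comparable, so that the two notions coincide. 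If that fails, I would redefine the parent via a maximal candidate; the monotonicity of supports and the acyclicity argument are unaffected by this change.
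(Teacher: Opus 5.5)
Your proof is correct and follows the paper's route. You pick $v$ with $T \prec T_v^+ \subtreeiso T_c$; your choice via the first index where $d^T_i < d^{T_c}_i$ makes explicit what the paper only asserts. You then show $\mathcal T(T_v^+) = \mathcal T(T_c)$ by contradicting the choice of the parent, and conclude by the closedness of $T_c$.

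Reading the parent rule as $\subtreeiso$-maximal is in fact forced. The paper's final contradiction needs it as well, and under the literal minimal reading the lemma is false. Take stars $S_1,S_2,S_3$ with $1,2,3$ leaves. The parent of $S_3$ would then be $S_1$, yet no $T_v^+$ built from $S_1$ has $\mathit{mct}(\mathcal T(T_v^+)) \treeiso S_3$. So drop your fallback about pairwise comparability: these candidates are totally ordered, yet the minimal and maximal ones differ, so comparability does not make them coincide.
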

\begin{proof}
    From the definition of the parent of $T$,
    $\mathcal T$ has a tree $T'$ such that $T \treeiso
    \mathit{mct}(\mathcal T(T_c) \cup \set{T'})$.
    Moreover, $T$ has a vertex $v$ such that $T_v^+ \subtreeiso T_c$
    since $T$ is subtree isomorphic to $T_c$.
    Since $T$ is closed and $T_v^+ \subtreeiso T_c$, $\mathcal T(T)
    \supset \mathcal T(T_v^+) \supseteq \mathcal T(T_c)$.
    If $\mathit{mct}(\mathcal T(T_v^+))$ is not tree isomorphic to $T_c$,
    $\mathcal T(T_v^+)$ is not equivalent to $\mathcal T(T_c)$.
    Since $T_v^+ \subtreeiso T_c$, $\mathcal T(T_v^+)$ is a proper superset
    of $\mathcal T(T_c)$.
    If such $\mathcal T(T_v^+)$ exists,
    it contradicts the fact that
    $T_c$ is a child of $T$ since $\mathit{mct}(\mathcal T(T_v^+)) \subtreeiso T_c$.
\end{proof}

From \Cref{lem:adj}, we obtain a polynomial-delay and
polynomial-space algorithm using reverse search.

\begin{theorem}\label{thm:complexity-of-enumerate-closed-tree}
    If each tree of $\mathcal T$ has height of at most $2$,
    then \textup{\textsc{Closed Frequent Tree Mining}} can be solved in
    polynomial delay and polynomial space.
\end{theorem}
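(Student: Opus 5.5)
The plan is to instantiate reverse search on the family $\mathcal F$ already set up above. The root is $R=\mathit{mct}(\mathcal T)$, the parent map is the one based on $T'$, and the neighbors are $\mathit{mct}(\mathcal T(T_v^+))$ for $v\in V(T)$. The algorithm performs a depth-first traversal of $\mathcal F$ starting at $R$. At a closed frequent tree $T$, it generates every neighbor $N_v=\mathit{mct}(\mathcal T(T_v^+))$. It discards $N_v$ if $\size{\mathcal T(T_v^+)}<\theta$, and also if the parent of $N_v$ is not isomorphic to $T$. Every remaining $N_v$ is recursed on. If $R$ itself is not $\theta$-frequent, the output is empty; this happens only when $\size{\mathcal T}<\theta$.

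\textbf{Correctness.} There are three ingredients.
\begin{enumerate}
\item Every vertex of $\mathcal F$ is a closed tree. By \Cref{thm:bipartite}, each $N_v$ is closed, since $\mathcal T(\mathit{mct}(\mathcal T'))=\mathcal T'$ holds for $\mathcal T'=\mathcal T(T_v^+)$. This equality needs a short check: $\mathcal T'$ is upward closed in the sense that any tree containing $T_v^+$ lies in $\mathcal T'$, and $\mathit{mct}(\mathcal T')\succeq T_v^+$.
\item The parent relation strictly increases the support. Following it from any closed frequent tree therefore reaches $R$ after at most $\size{\mathcal T}$ steps, and support monotonicity keeps every ancestor $\theta$-frequent. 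So the closed frequent trees form a subtree of $\mathcal F$ rooted at $R$.
\item By \Cref{lem:adj}, every child of $T$ appears among its neighbors. Hence the traversal visits every closed frequent tree.
\end{enumerate}
Each tree is output exactly once because each non-root node has a unique parent. We test whether the parent is isomorphic to $T$ using the canonical degree-sequence representation $d(\cdot)$. This also deduplicates, because distinct $v$ with isomorphic $N_v$ must be handled by emitting each child only once. To do this, compare canonical forms and accept $N_v$ only for the first $v$ producing it, which is a polynomial check.

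\textbf{Polynomial-time primitives.} Computing $\mathcal T(X)$ takes polynomial time: for height $\le 2$, containment reduces to pointwise comparison of padded degree sequences, or one can use \cite{MATULA197891}. Computing $\mathit{mct}$ takes polynomial time by the pointwise-minimum formula of \Cref{thm:uniqueness-of-common-trees-of-n-trees}. For the parent, we try each $T'\in\mathcal T\setminus\mathcal T(T)$, compute $\mathit{mct}(\mathcal T(T)\cup\set{T'})$, and select a $\subtreeiso$-minimal one by a fixed tie-breaking rule, such as the smallest index among minimal ones. This requires $O(\size{\mathcal T}^2)$ comparisons. The number of neighbors is at most $\size{V(T)}$, and $T$ is contained in an input tree, so it is polynomial.

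\textbf{Delay and space (the main obstacle).} A naive DFS can climb back through many ancestors without output, giving delay proportional to depth times per-node work. Depth is at most $\size{\mathcal T}$, so even the naive bound is polynomial. Alternatively, the standard alternating-output trick gives delay bounded by a constant number of node expansions: output on entering at even depth and on leaving at odd depth. For space, I would avoid storing the recursion stack of neighbor lists. Instead, I would store only the current tree and recompute, on return from a child $N_v$, its parent $T$ and the index $v$ from which to resume, as in classical reverse search. Storing the full stack is also acceptable, since depth $\le\size{\mathcal T}$ and each frame is polynomial. The subtle point I would verify carefully is that the parent of a child is always $T$ up to isomorphism, so the tie-breaking rule must be isomorphism-invariant. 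Using sets $\mathcal T(\cdot)$ of input indices rather than tree shapes ensures this.
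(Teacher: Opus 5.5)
You follow the paper's reverse search exactly: the same root, parent and neighbours, with Lemma~\ref{lem:adj} carrying completeness. Your added bookkeeping is sound: the frequency filter, deduplicating isomorphic $N_v$, the index-based isomorphism-invariant tie-break, the depth bound $\size{\mathcal T}$, and the delay and space analysis. The step that fails is ``by Lemma~\ref{lem:adj}, every child of $T$ appears among its neighbours'' under the parent rule you state, namely taking a $\subtreeiso$-\emph{minimal} $\mathit{mct}(\mathcal T(T)\cup\set{T'})$. This is also the literal wording before Lemma~\ref{lem:adj}. With that rule the parent can jump over intermediate closed trees, and Lemma~\ref{lem:adj} is false.

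Write a height-$2$ tree as the degree sequence of the root's children. Let $\mathcal T=\set{X,A,B}$ with $X=(2,2)$, $A=(2,1)$, $B=(1,1,1)$, and $\theta=1$. The closed trees are $R=(1,1)$, $(2,1)$, $(2,2)$ and $(1,1,1)$, with supports $\set{X,A,B}$, $\set{X,A}$, $\set{X}$ and $\set{B}$. For $(2,2)$ the candidates are $\mathit{mct}(\set{X,A})=(2,1)$ and $\mathit{mct}(\set{X,B})=(1,1)$, so the minimal rule makes $R$ its parent. The only neighbours of $R$, however, are $\mathit{mct}(\mathcal T((1,1,0)))=(1,1,1)$ and $\mathit{mct}(\mathcal T((2,1)))=(2,1)$; adding a leaf below a grandchild gives empty support. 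When $(2,2)$ is produced from $(2,1)$, your parent test rejects it. Hence $(2,2)$ is never output.

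The repair is to choose $T'$ so that $\mathit{mct}(\mathcal T(T)\cup\set{T'})$ is $\subtreeiso$-\emph{maximal}. Equivalently, its support is an inclusion-minimal closed superset of $\mathcal T(T)$. This is still found with your $O(\size{\mathcal T}^2)$ comparisons, and the support still strictly increases. With this rule the argument of Lemma~\ref{lem:adj} actually works. Suppose $N=\mathit{mct}(\mathcal T(T_v^+))\not\treeiso T_c$, and pick $T''\in\mathcal T(T_v^+)\setminus\mathcal T(T_c)$. Because $\mathcal T(T_c)\cup\set{T''}\subseteq\mathcal T(T_v^+)=\mathcal T(N)$, we get $\mathit{mct}(\mathcal T(T_c)\cup\set{T''})\succeq N\succ T$. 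This contradicts the maximality of the chosen candidate $T$. After this change the rest of your plan goes through unchanged.
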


\section{Maximal Frequent Tree Mining}\label{sec:hardness-of-maximal-subtree-mining}

\newcommand{\mct}{\mathit{mct}}
\newcommand{\ch}{\mathit{ch}}

We show that \textsc{Maximal Frequent Tree Mining} has no
output-polynomial time algorithms
unless $\P = \NP$ in both cases.
To prove this, we introduce a technique to show
the hardness of the enumeration problems.
To show the hardness of the enumeration problem,
we consider the problem called the \emph{another solution problem}
(also called finished decision
    problem~\cite{DBLP:phd/basesearch/Bokler18}, or additional
problems~\cite{BABIN2017316}).
This discussion is used as folklore to show the hardness of enumeration
problems~\cite{DBLP:journals/siamcomp/LawlerLK80,DBLP:journals/ipl/BrosseDKLUW24,Boros:2003,DBLP:phd/basesearch/Bokler18,DBLP:journals/pacmmod/BuzzegaCKKP25,DBLP:conf/sand/Kurita0SU25}.
For details on the hardness of the another solution problem and
enumeration problems, see~\cite{DBLP:conf/sand/Kurita0SU25,DBLP:phd/basesearch/Bokler18,DBLP:journals/ipl/BrosseDKLUW24}.

\subsection{Hardness of the Unordered Case}

We show that there are no output-polynomial time algorithms for
enumerating all maximal frequent trees of rooted unordered trees
$\mathcal T$ even if the height of each tree is at most $3$.
To this end, we show the \NP-completeness of \textsc{Another Maximal
Frequent Tree}.

We provide a reduction from \textsc{Another Maximal Frequent Itemset}.
An input of this problem is a tuple
$(U,\mathcal X,\mathcal Y,\eta)$, where
$U = \set{1, \ldots, n}$ is a set of elements,
$\mathcal X = \set{X_1, \ldots, X_k}$ is a collection of itemsets,
$\mathcal Y = \set{Y_1, \ldots, Y_\ell}$ is a set of maximal frequent
itemsets, and $\eta$ is an integer threshold.
The task is to determine whether there exists a maximal $\eta$-frequent
itemset of $\mathcal X$ that is not contained in any itemset of
$\mathcal Y$.
The authors in~\cite{Boros:2003} showed the \NP-completeness of the problem.
Our reduction encodes each item of $U$ by a pair of trees and then translates itemsets into trees in a containment-preserving way.

As in \Cref{ssec:dual-hardness-of-height4}, we use a pair of selection gadgets.
For completeness, we spell out the construction used in this subsection.
For each element $i \in U$, we define two trees $P_i$ and $\bar{P}_i$, called the selection gadgets for $i$, as follows.
\begin{enumerate}
    \item $P_i$: the root $r(P_i)$ has $i$ children $p_1, \dots, p_i$, and each child $p_j$ has $2(n-i)+2$ leaf children.
    \item $\bar{P}_i$: the tree from $P_i$ by removing one leaf child from one child of $r(P_i)$.
\end{enumerate}
From the construction of $P_i$ and $\bar{P}_i$, we obtain the following simple observation.

\begin{observation}\label{obs:selection-gadget-containment}
    For any $i,j \in U$ and $A \in \set{P_i,\bar{P}_i}$,
    $B \in \set{P_j,\bar{P}_j}$, if $A \subtreeiso B$, then $i=j$.
    Moreover, for each $i \in U$, we have
    $\bar{P}_i \subtreeiso P_i$ and $P_i \not\subtreeiso \bar{P}_i$.
\end{observation}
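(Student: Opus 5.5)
The plan is to read off everything from the degree sequences of the gadgets. $P_i$ is the height-$2$ rooted tree whose root has exactly $i$ children, each of degree $2(n-i)+2$. $\bar P_i$ has the same shape except that one child has degree $2(n-i)+1$. We will use the characterization already established in the proof of \Cref{thm:uniqueness-of-common-trees-of-n-trees}: for trees $Q,R$ of height at most $2$, $Q\subtreeiso R$ if and only if the non-increasing degree sequences satisfy $d^Q_t\le d^R_t$ componentwise, with $(-1)$-padding. The statement then reduces to elementary inequalities between $i,j$ and the numbers $2(n-i)+2$ and $2(n-j)+2$.

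\textbf{First claim: $A\subtreeiso B$ forces $i=j$.} Suppose $A\subtreeiso B$ with $A\in\{P_i,\bar P_i\}$ and $B\in\{P_j,\bar P_j\}$.

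1. The number of root children cannot increase under containment, since $d^A_t\ge 0>-1$ for $t\le i$. Hence $i\le j$.
2. Suppose for contradiction that $i<j$.
   - The largest child degree of $A$ is at least $2(n-i)+1$, because at most one child of $A$ lost a leaf and $i\ge1$. If $i=1$ and $A=\bar P_1$, that degree is $2(n-1)+1$.
   - The largest child degree of $B$ is at most $2(n-j)+2\le 2(n-i)$, using $j\ge i+1$.
   - So $d^A_1\ge 2(n-i)+1>2(n-i)\ge d^B_1$, contradicting $d^A_1\le d^B_1$.

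Thus $i=j$. The factor $2$ in $2(n-i)+2$ is exactly what creates the gap, so that removing one leaf never closes it.

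\textbf{Second claim: $\bar P_i\subtreeiso P_i$ and $P_i\not\subtreeiso\bar P_i$.}

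1. The containment $\bar P_i\subtreeiso P_i$ holds because $\bar P_i$ is obtained by deleting a leaf, so it is literally a subtree of $P_i$.
2. For non-containment, $P_i$ has $i(2(n-i)+2)$ leaves. A containment maps leaves at depth $2$ injectively into depth-$2$ vertices. But $\bar P_i$ has one fewer depth-$2$ vertex, so no such containment exists.
3. The same conclusion follows from the sequences: the $i$-th entry of $d(P_i)$ is $2(n-i)+2$, while the $i$-th entry of $d(\bar P_i)$ is $2(n-i)+1$.

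\textbf{Expected obstacle.} The only delicate point is edge cases, such as $i=1$ or $i=n$, where $\bar P_i$'s largest child degree might drop. The bound $2(n-i)+1>2(n-i)$ handles these uniformly. Nothing else is expected to be hard.
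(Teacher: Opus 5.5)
Your proof is correct, and it is essentially the argument the paper leaves implicit when it says the observation follows ``from the construction'': comparing root-child counts gives $i\le j$, the gap $2(n-i)+1>2(n-i)\ge 2(n-j)+2$ rules out $i<j$, and $\bar P_i\subtreeiso P_i$, $P_i\not\subtreeiso\bar P_i$ follow by deleting, respectively counting, a leaf. One point worth stating explicitly is that every gadget has height exactly $2$, so any embedding must send root to root; this justifies applying the degree-sequence criterion, and for $P_i\not\subtreeiso\bar P_i$ the bound $|V(P_i)|>|V(\bar P_i)|$ alone already suffices.
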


For each itemset $X \subseteq U$, let $\tau(X)$ be the tree obtained by
creating a new root and attaching, for each $i \in U$, a copy of $P_i$
if $i \in X$, and a copy of $\bar{P}_i$ otherwise.
Thus, membership of $i$ in $X$ is encoded by the choice between $P_i$ and
$\bar{P}_i$.
Using this transformation, given an instance $(U,\mathcal{X},\mathcal{Y},\eta)$ of
\textsc{Another Maximal Frequent Itemset}, we construct the instance
$
(
    \set{ \tau(X) \mid X \in \mathcal{X} },
    \set{ \tau(Y) \mid Y \in \mathcal{Y} },
    \eta
)
$
of \textsc{Another Maximal Frequent Tree}.

A tree $T$ is called \emph{valid} if the children of the root of $T$ can be put in one-to-one corresponding with the elements of $U$ so that, for each $i \in U$, the subtree rooted at the child assigned to $i$ is isomorphic to either $P_i$ or $\bar{P}_i$.
For a valid tree $T$ and an element $i \in U$, this subtree is called the \emph{$i$-subtree of $T$}.

We next prove that $\tau$ is a bijection.

\begin{lemmarep}\label{lem:valid-itemset-bijection}
    The map $\tau$ is a bijection from $2^U$ to the set of valid trees.
\end{lemmarep}

\begin{toappendix}
    \begin{proof}
        We first show that the map $\tau$ is injective.
        Indeed, if $X_1 \neq X_2$, then there exists $i \in U$ such that $i$ belongs to exactly one of $X_1$ and $X_2$.
        Hence, the $i$-subtree of one of $\tau(X_1)$ and $\tau(X_2)$ is isomorphic to $P_i$, whereas the $i$-subtree of the other is isomorphic to $\bar{P}_i$.
        By \Cref{obs:selection-gadget-containment}, these two subtrees are not isomorphic.
        Therefore, $\tau(X_1) \not\treeiso \tau(X_2)$.

        Conversely, let $T$ be a valid tree.
        By \Cref{obs:selection-gadget-containment}, for every $i \in U$, the root of $T$ has a unique child whose subtree is isomorphic to either $P_i$ or $\bar{P}_i$. Thus, we can uniquely recover an itemset
        \begin{align*}
            \rho(T) \coloneqq \set{
                i \in U
                \mid
                T \text{ has a child } u \text{ of } r(T) \text{ such that } T(u) \treeiso P_i
            }.
        \end{align*}
        By construction, $\rho(\tau(X)) = X$ for every $X \subseteq U$.
        Moreover, for every valid tree $T$, the tree $\tau(\rho(T))$
        has, for each $i \in U$, the same type of $i$-subtree as $T$.
        Hence, $\tau(\rho(T)) \treeiso T$.
        Therefore, $\tau$ induces a bijection between $2^U$ and the set of
        valid trees.
    \end{proof}
\end{toappendix}

Thus, the map $\tau$ translates itemsets into valid trees.
The next lemma shows that this translation is containment-preserving.

\begin{lemmarep}\label{obs:unordered:subtree_iso}
    For two itemsets $X_1$ and $X_2$,
    $\tau(X_1) \subtreeiso \tau(X_2)$ if and only if $X_1 \subseteq X_2$.
\end{lemmarep}

\begin{toappendix}
    \begin{proof}
        ($\Rightarrow$)
        Suppose that $X_1 \not\subseteq X_2$. Then, there exists an element $i \in X_1 \setminus X_2$.
        By the construction of $\tau(X_1)$, the root of $\tau(X_1)$ has a child whose rooted subtree is isomorphic to $P_i$.
        On the other hand, by the construction of $\tau(X_2)$, the root of $\tau(X_2)$ does not have such a child.
        By \Cref{obs:selection-gadget-containment}, $P_i$ is not subtree isomorphic to any rooted subtree attached to a child of $r(\tau(X_2))$.
        Hence, the $i$-subtree of $\tau(X_1)$ cannot be mapped to the $i$-subtree of $\tau(X_2)$.
        Therefore, $\tau(X_1) \not\subtreeiso \tau(X_2)$.

        ($\Leftarrow$)
        Suppose that $X_1 \subseteq X_2$. We construct a subtree isomorphism from $\tau(X_1)$ to $\tau(X_2)$.
        For each $i \in U$, consider the $i$-subtree of $\tau(X_1)$.
        If $i \in X_1$, then the $i$-subtree of $\tau(X_1)$ is isomorphic to $P_i$.
        Since $X_1 \subseteq X_2$, we have $i \in X_2$, and hence the $i$-subtree of $\tau(X_2)$ is also isomorphic to $P_i$.
        If $i \notin X_1$, then the $i$-subtree of $\tau(X_1)$ is isomorphic to $\bar{P}_i$.
        If $i \notin X_2$, then the $i$-subtree of $\tau(X_2)$ is isomorphic to $\bar{P}_i$; if $i \in X_2$, then it is isomorphic to $P_i$.
        In both cases, by \Cref{obs:selection-gadget-containment}, the $i$-subtree of $\tau(X_1)$ is subtree isomorphic to $i$-subtree of $\tau(X_2)$.

        Therefore, by mapping each $i$-subtree of $\tau(X_1)$ to the $i$-subtree of $\tau(X_2)$, we obtain a subtree isomorphism from $\tau(X_1)$ to $\tau(X_2)$.
        Hence, $\tau(X_1) \subtreeiso \tau(X_2)$.
    \end{proof}
\end{toappendix}

The selection gadgets defined above have the same rigidity property as the gadgets used in \Cref{ssec:dual-hardness-of-height4}.
Using \Cref{obs:rigidity}, we can characterize the maximal common tree of two selection gadgets corresponding to the same element.

\begin{lemmarep}\label{lem:gadget-mct}
    Let $i \in U$, and let $A$ and $B$ be trees each isomorphic to either $P_i$ or $\bar{P}_i$.
    Then the maximal common tree of $\set{ A, B }$ is unique. Moreover, it is isomorphic to $P_i$ if both $A$ and $B$ are isomorphic to $P_i$, and isomorphic to $\bar{P}_i$ otherwise.
\end{lemmarep}

\begin{toappendix}
    \begin{proof}
        If both $A$ and $B$ are isomorphic to $P_i$, then the claim is immediate.
        Otherwise, one of them is isomorphic to $\bar{P}_i$.
        Since $\bar{P}_i \subtreeiso P_i$, the tree $\bar{P}_i$ is a common tree of $A$ and $B$.
        Moreover, by \Cref{obs:rigidity}, no common tree can strictly contain $\bar{P}_i$ unless it is isomorphic to $P_i$.
        Hence $\bar{P}_i$ is the unique maximal common tree.
    \end{proof}
\end{toappendix}

We next use \Cref{lem:gadget-mct} to show that the class of valid trees is closed under taking maximal common trees.

\begin{lemmarep}\label{lem:mct-valid-two}
    Let $S, T$ be valid trees.
    Then, every maximal common tree of $\set{ S, T }$ is valid.
\end{lemmarep}

\begin{toappendix}
    \begin{proof}
        For each $i \in U$, let $S_i$ and $T_i$ be the $i$-subtrees of $S$ and $T$, respectively.
        Let $M_i$ be the maximal common tree of $\set{ S_i, T_i }$.
        By \Cref{lem:gadget-mct}, $M_i$ is isomorphic to either $P_i$ or $\bar{P}_i$.
        Let $M$ be the tree obtained by attaching all $M_i$ for each $i \in U$, to a common root.
        Then $M$ is a common tree of $\set{ S, T }$.
        Moreover, by the construction of $M$, the tree $M$ is valid.
        Thus, it is enough to prove that every common tree of $\set{ S, T }$ is contained in $M$.

        Let $Q$ be a common tree of $\set{ S, T }$, and fix subtree isomorphism mappings from $Q$ to $S$ and $T$.
        Since these mappings preserve the roots, each child subtree of $Q$ is mapped into some $i$-subtree of $S$ and into some $j$-subtree of $T$.

        For a vertex $v \in \ch(r(Q))$, let $Q(v)$ denote the subtree of $Q$ rooted at $v$.
        Suppose that $Q(v)$ is mapped into $S_i$.
        We prove that $Q(v) \subtreeiso M_i$.
        If $S_i \treeiso \bar{P}_i$, then $M_i \treeiso \bar{P}_i$, and hence $Q(v) \subtreeiso M_i$.
        Thus, we assume that $S_i \treeiso P_i$.
        If $M_i \treeiso P_i$, then again $Q(v) \subtreeiso M_i$.
        It remains to consider the case $M_i \treeiso \bar{P}_i$.
        Suppose, toward a contradiction, that $Q(v) \not\subtreeiso \bar{P}_i$.
        By \Cref{obs:rigidity}, we have $Q(v) \treeiso P_i$.
        Since $Q(v)$ is also contained in some $j$-subtree $T_j$ of $T$, this implies that $P_i \subtreeiso T_j$.
        By \Cref{obs:selection-gadget-containment}, this is possible only when $j = i$ and $T_i \treeiso P_i$.
        However, in this case, the maximal common tree of $S_i$ and $T_i$ is isomorphic to $P_i$, contradicting $M_i \treeiso \bar{P}_i$.
        Therefore, $Q(v) \subtreeiso \bar{P}_i \treeiso M_i$.
        This proves the claim.

        Since the mapping from $\ch(r(Q))$ to $\ch(r(S))$ is injective, distinct children of $r(Q)$ correspond to distinct indices of $U$.
        By the claim, each child subtree of $Q$ can be mapped to the corresponding child subtree $M_i$ of $M$.
        Hence, $Q \subtreeiso M$.
        Thus, every common tree of $\set{ S, T }$ is contained in $M$.
        Since $M$ itself is a common tree, $M$ is the unique maximal common tree of $\set{ S, T }$.
    \end{proof}
\end{toappendix}

By applying the same argument repeatedly, we obtain the following extension to any nonempty collection of valid trees.

\begin{lemma}\label{lem:mct-valid}
    Let $\mathcal{T}$ be a set of valid trees.
    Then, every tree of $\mct(\mathcal{T})$ is valid.
\end{lemma}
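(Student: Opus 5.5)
We argue by induction on $\size{\mathcal T}$, and we prove a slightly stronger statement: $\mathcal T$ has a unique maximal common tree, it is valid, and it contains every common tree of $\mathcal T$.

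\textbf{Base case.} If $\size{\mathcal T}=1$, the single valid tree is its own unique maximal common tree, and every common tree is contained in it.

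\textbf{Two trees.} If $\size{\mathcal T}=2$, the proof of \Cref{lem:mct-valid-two} already gives more than its statement. It constructs a valid tree $M$ that contains every common tree of $\set{S,T}$. Hence $M$ is the unique maximal common tree.

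\textbf{Inductive step.} For $\mathcal T=\set{T_1,\ldots,T_k}$ with $k\ge 2$, apply the induction hypothesis to $\mathcal T'=\set{T_1,\ldots,T_{k-1}}$. This gives a valid tree $M'$ that contains every common tree of $\mathcal T'$. Now apply \Cref{lem:mct-valid-two} to the pair $\set{M',T_k}$, obtaining the valid tree $M$ that contains every common tree of $\set{M',T_k}$. It remains to check that the common trees of $\mathcal T$ are exactly the common trees of $\set{M',T_k}$.

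\begin{itemize}
  \item If $Q$ is common to $\mathcal T$, then $Q$ is common to $\mathcal T'$, so $Q\subtreeiso M'$ by the induction hypothesis. Also $Q\subtreeiso T_k$, so $Q$ is common to $\set{M',T_k}$.
  \item If $Q$ is common to $\set{M',T_k}$, then $Q\subtreeiso M'\subtreeiso T_j$ for every $j<k$. By transitivity of $\subtreeiso$ (compose the injective, parent-preserving maps), $Q$ is common to $\mathcal T$.
\end{itemize}

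So $M$ is a common tree of $\mathcal T$ that contains every common tree of $\mathcal T$. Any maximal common tree $Q$ of $\mathcal T$ then satisfies $Q\subtreeiso M$, and maximality forces $Q\treeiso M$. Hence every tree of $\mct(\mathcal T)$ is isomorphic to the valid tree $M$. Validity is invariant under isomorphism, so every such tree is valid.

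\textbf{Main obstacle.} The only delicate point is that \Cref{lem:mct-valid-two} as stated gives validity but not the containment property. A naive induction that only tracks validity breaks down: a common tree of $\mathcal T'$ need not be related to a chosen maximal one. The plan is therefore to invoke the stronger conclusion actually established in its proof, namely that every common tree of $\set{S,T}$ is contained in $M$. Transitivity of subtree isomorphism is routine.
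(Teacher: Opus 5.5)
Your proof is correct and takes the paper's approach: the paper proves \Cref{lem:mct-valid} simply by ``applying the same argument repeatedly'' from \Cref{lem:mct-valid-two}. Your induction makes that precise by carrying the right invariant, namely that there is a valid common tree containing every common tree. This invariant is what the proof of \Cref{lem:mct-valid-two} actually establishes, and the paper's one-line justification relies on it implicitly.
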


Let $\mathcal{X} \subseteq 2^U$ be the collection of input itemsets.
We denote by $\mathcal{T}_\tau(\mathcal{X})$ the set of trees obtained by applying the map $\tau$ to every itemset in $\mathcal{X}$, that is $\mathcal{T}_\tau(\mathcal{X}) \coloneqq \set{ \tau(X) \mid X \in \mathcal{X} }$.

For the set $\mathcal{T}_\tau(\mathcal{X})$ of trees constructed in this way, the following lemma holds.

\begin{lemmarep}\label{lem:maximal-frequent-tree-valid}
    Let $T$ be a maximal $\eta$-frequent tree of $\mathcal{T}_\tau(\mathcal{X})$.
    Then $T$ is valid.
\end{lemmarep}

\begin{toappendix}
    \begin{proof}
        Let $\mathcal{A}$ be the set of all trees in $\mathcal{T}_\tau(\mathcal{X})$ that contain $T$, that is $\mathcal{A} \coloneqq \set{ S \in \mathcal{T}_\tau(\mathcal{X}) \mid T \subtreeiso S }$.
        Then, $T$ is one of the maximal common trees of $\mathcal{A}$.
        Indeed, if not, then there exists a common tree $T'$ of $\mathcal{A}$ that properly contains $T$.
        Since $T'$ is contained in every tree of $\mathcal{A}$, it is also frequent in $\mathcal{T}_\tau(\mathcal{X})$.
        This contradicts the maximality of $T$.
        Hence, by \Cref{lem:mct-valid}, every maximal common tree of $\mathcal{A}$ is valid. In particular, $T$ is valid.
    \end{proof}
\end{toappendix}

Similarly to $\mathcal{T}_\tau(\mathcal{X})$, for the itemset $Y \subseteq U$, let $\mathcal{T}_\tau(\mathcal{Y})$ be the set of trees obtained by applying the map $\tau$ to every itemset in $\mathcal{Y}$.

We now complete the reduction by showing the equivalence between the given instance of \textsc{Another Maximal Frequent Itemset} and the constructed instance of \textsc{Another Maximal Frequent Tree}.

\begin{lemmarep}\label{lem:unordered:itemset-eq-tree}
    The following two statements are equivalent.
    \begin{enumerate}
        \item There exists a maximal $\eta$-frequent itemset $X$ of
            $\mathcal{X}$ such that $X \not\subseteq Y$ for every
            $Y \in \mathcal{Y}$.
        \item There exists a maximal $\eta$-frequent tree $T$ of
            $\mathcal{T}_\tau(\mathcal{X})$ such that
            $T \not\subtreeiso S$ for every
            $S \in \mathcal{T}_\tau(\mathcal{Y})$.
    \end{enumerate}
\end{lemmarep}

\begin{toappendix}
    \begin{proof}
        ($1 \Rightarrow 2$)
        Let $T = \tau(X)$.
        By \Cref{obs:unordered:subtree_iso}, the frequency of $T$ in $\mathcal{T}_\tau(\mathcal{X})$ is equal to the frequency of $X$ in $\mathcal{X}$. Hence, $T$ is $\eta$-frequent.

        We show that $T$ is maximal.
        Suppose that there exists an $\eta$-frequent tree $T'$ such that $T \prec T'$.
        By \Cref{lem:maximal-frequent-tree-valid}, $T'$ is valid.
        Hence, by \Cref{lem:valid-itemset-bijection}, there exists an itemset $X' \subseteq U$ such that $T' \treeiso \tau(X')$.
        By \Cref{obs:unordered:subtree_iso}, $X \subsetneq X'$ and $X'$ is $\eta$-frequent, contradicting the maximality of $X$.
        Therefore, $T$ is maximal.

        Finally, if $T \subtreeiso \tau(Y)$ for some $Y \in \mathcal{Y}$, then \Cref{obs:unordered:subtree_iso} implies $X \subseteq Y$, which contradicts the assumption.
        Hence $T$ is not contained in any tree of $\mathcal{T}_\tau(\mathcal{Y})$.

        ($2 \Rightarrow 1$)
        By \Cref{lem:maximal-frequent-tree-valid}, $T$ is valid.
        Hence, by \Cref{lem:valid-itemset-bijection}, there exists the itemset $X \subseteq U$ such that $T \treeiso \tau(X)$.

        We show that $X$ is maximal.
        Suppose that there exists an $\eta$-frequent itemset $X'$ such that $X \subsetneq X'$.
        Then $\tau(X) \prec \tau(X')$ by \Cref{obs:unordered:subtree_iso}.
        Moreover, $\tau(X')$ is $\eta$-frequent.
        This contradicts to the maximality of $T \treeiso \tau(X)$.
        Hence, $X$ is maximal.

        Finally, if $X \subseteq Y$ for some $Y \in \mathcal{Y}$, then $\tau(X) \subtreeiso \tau(Y)$ by \Cref{obs:unordered:subtree_iso}, contradicting the assumption that $T$ is not contained in any tree of $\mathcal{T}_\tau(\mathcal{Y})$.
        Therefore, $X \not\subseteq Y$ for every $Y \in \mathcal{Y}$.

    \end{proof}
\end{toappendix}

From \Cref{lem:unordered:itemset-eq-tree},
if \textsc{Maximal Frequent Tree Mining} for the unordered case can be
solved in output-polynomial time,
\textsc{Another Maximal Itemset} can be solved in polynomial time.
Therefore, we obtain the following theorem.

\begin{theorem}\label{thm:hardness-maximal-unordered}
    There are no output-polynomial time algorithms for
    \textup{\textsc{Maximal Frequent Tree Mining}} unless $\P = \NP$,
    even if the height of each rooted unordered tree is at most $3$.
\end{theorem}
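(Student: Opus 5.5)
The proof is the standard ``another solution'' argument, reduced through \Cref{lem:unordered:itemset-eq-tree}. Suppose an algorithm $\mathcal A$ enumerates all maximal $\eta$-frequent trees of any database of rooted unordered trees of height at most $3$ in time bounded by a polynomial $p(|I|+|\mathrm{Sol}(I)|)$. Take an instance $(U,\mathcal X,\mathcal Y,\eta)$ of \textsc{Another Maximal Frequent Itemset}, which is \NP-complete by~\cite{Boros:2003}. We build $\mathcal T_\tau(\mathcal X)$ and $\mathcal T_\tau(\mathcal Y)$ and check two properties.

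\textbf{Height.} Each gadget $P_i$ or $\bar P_i$ has height $2$. In $\tau(X)$ a gadget root is a child of the new root, so every $\tau(X)$ has height exactly $3$.

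\textbf{Size.} $P_i$ has $1+i(2(n-i)+2)=O(n^2)$ vertices, so $\tau(X)$ has $O(n^3)$ vertices. The whole construction is therefore polynomial in the size of the itemset instance.

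Next I would decide the itemset instance in polynomial time by running $\mathcal A$ under a budget. By \Cref{lem:maximal-frequent-tree-valid}, every maximal $\eta$-frequent tree is valid. By \Cref{lem:valid-itemset-bijection,obs:unordered:subtree_iso}, maximal $\eta$-frequent trees of $\mathcal T_\tau(\mathcal X)$ correspond bijectively, via $\tau$, to maximal $\eta$-frequent itemsets of $\mathcal X$. Each $\tau(Y)$ with $Y\in\mathcal Y$ is therefore itself a maximal frequent tree. Let $s=|\mathcal Y|$, and run $\mathcal A$ on $(\mathcal T_\tau(\mathcal X),\eta)$ for $p(N+(s+1)\cdot q)$ steps. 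Here $N$ is the input size and $q$ is a polynomial bound on the encoding size of any valid tree; the output-size term in $p$ is measured in this encoding length.

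\textbf{Deciding from the run.} There are two outcomes.
\begin{itemize}
\item If $\mathcal A$ halts within the budget, we test each output tree $T$ for $T\subtreeiso \tau(Y)$, for every $Y\in\mathcal Y$. Subtree isomorphism of rooted unordered trees is polynomial~\cite{MATULA197891}. We answer ``yes'' exactly when some $T$ is contained in no $\tau(Y)$; this is correct by \Cref{lem:unordered:itemset-eq-tree}.
\item If $\mathcal A$ does not halt within the budget, the instance has more than $s$ solutions. At most $s$ of them are of the form $\tau(Y)$. So there is a maximal frequent tree outside $\mathcal T_\tau(\mathcal Y)$. Since it is maximal and distinct from every $\tau(Y)$, which are also maximal, it is not contained in any of them, and we answer ``yes''.
\end{itemize}
This decides an \NP-complete problem in polynomial time, so $\P=\NP$.

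The one point I expect to need care is the identification of $\mathcal Y$ with maximal frequent trees in the budget argument. The problem statement says $\mathcal Y$ consists of maximal frequent itemsets; if that were only a promise, one could first verify in polynomial time that each $Y$ is $\eta$-frequent and maximal. Given this, and the bijection already established, the remaining steps are routine bookkeeping on the polynomial bounds.
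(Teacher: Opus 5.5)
Your proposal is correct and follows the paper's route: the same $\tau$-reduction from \textsc{Another Maximal Frequent Itemset} through \Cref{lem:unordered:itemset-eq-tree}, with the height and size checks and the another-solution budget argument written out explicitly where the paper leaves them implicit. The only slip is cosmetic: $P_i$ has $1+i+i(2(n-i)+2)$ vertices because you omitted the $i$ children of the root, but the $O(n^2)$ bound still holds.
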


\subsection{Hardness of the Ordered Case}

We show that there are no output-polynomial time algorithms for
enumerating all maximal frequent trees of rooted ordered trees
$\mathcal T$ even if the height of each tree is at most $2$.
To this end, we show the \NP-hardness of \textsc{Another Maximal
Frequent Tree} for ordered trees.

We provide a reduction from \textsc{Another Maximal Frequent Itemset},
which is the same source problem as in the unordered case.
Let $(U,\mathcal X,\mathcal Y,\eta)$ be an instance of this problem.
Our reduction is as follows.
We set the threshold $\theta$ to $\eta$.
The gadgets used in the reduction are illustrated in \Cref{fig:maximal:ordered:gadgets}.
For each itemset $X \subseteq U$, we generate a rooted ordered tree
$T(X)$ as follows.
The root of $T(X)$ has $n$ children.
If $X$ contains an element $j$, we add a leaf to the $j$-th child
of the root of $T(X)$.
As shorthand notation, we denote the union of the multiset of
resultant trees of a set of itemsets $\mathcal X$ as $\mathcal
T(\mathcal X)$.

We define $\theta$ trees $\mathcal R = \set{R_1, \ldots, R_{\theta}}$.
Each $R_j$ has $n - 1$ children, and each child has a unique leaf.
Notice that each $R_i$ is tree isomorphic to $R_j$.
We denote the multiset of trees $\mathcal T(\mathcal{X}) \cup \mathcal
R$ as $\mathcal T_{\mathcal R}(\mathcal{X})$.
Moreover, $\mathcal T_{\mathcal R}(\mathcal Y)$ consists of $\mathcal T(\mathcal Y) \cup \set{R_1}$.

\begin{figure}[t]
    \centering
    \includegraphics[page=2, width=0.7\linewidth]{gadgets.pdf}
    \caption{
        The itemset gadget $T(X)$ and the tree $R_j$.
        In $T(X)$, the $j$-th child has a leaf child if and only if $j \in X$.
        In $R_j$, the root has $n - 1$ children and every child has a leaf child.
    }
    \label{fig:maximal:ordered:gadgets}
\end{figure}

From the above procedure, we obtain the instance of \textsc{Another Maximal Frequent Tree}.
Since $\mathcal T_{\mathcal R}(\mathcal Y)$ contains $R_1$, if there is another solution,
the root of this tree has exactly $n$ children.
In this construction, the following lemma holds.

\begin{lemmarep}\label{obs:subtree_iso}
    For two itemsets $X_1$ and $X_2$,
    $T(X_1)\subtreeiso T(X_2)$ if and only if $X_1 \subseteq X_2$.
\end{lemmarep}
\begin{toappendix}
    \begin{proof}
        Suppose that $X_1 \subseteq X_2$.
        We consider the following function $\varphi$ from $V(T(X_1))$ to $(T(X_2))$.
        For the root $r(T(X_1))$, $\varphi(r(T(X_1))) = r(T(X_2))$.
        For the $i$-th child $c^i_1$ of $r_1$, $\varphi(c^1_i) = c^2_i$,
        where $c^2_j$ is the $j$-th child of $r(T(X_2))$.
        From the construction of $T(X_1)$,
        the remaining vertices consist of a single leaf that is a
        grandchild of the root.
        We denote a leaf $\ell^1_i$ with the parent $c^1_i$ as $v_i$.
        Since $X_1 \subseteq X_2$ $T(X_2)$ also contains a leaf
        $\ell^2_i$ with the parent $c^2_i$.
        We define $\varphi(\ell^1_i) = \ell^2_i$.
        Therefore, $T(X_1) \subtreeiso T(X_2)$.

        Suppose that $T(X_1)$ is subtree isomorphic to $T(X_2)$.
        Therefore, there is a subtree isomorphism mapping $\varphi$ from
        $V(T(X_1))$ to $V(T(X_2))$.
        Since $\varphi$ is a subtree isomorphism mapping,
        $\varphi(c^1_i) = c^2_i$.
        If $X_1 \setminus X_2 \neq \emptyset$,
        $X_1\setminus X_2$ has an item $j$.
        In this case, $c^1_j$ has a child $\ell^1_j$.
        On the other hand, $c^2_j$ is a leaf.
        Therefore, the parent of $\varphi(\ell^1_j)$ is not $c^2_j$, and
        it contradicts that $\varphi$ is a subtree isomorphism mapping.
    \end{proof}
\end{toappendix}

In the remainder of this subsection, we show that
$\mathcal X$ has a maximal frequent itemset not contained in
$\mathcal Y$ if and only if
$\mathcal T_{\mathcal R}(\mathcal X)$ has a maximal frequent tree not
contained in $\mathcal T_{\mathcal R}(\mathcal Y)$.

\begin{lemmarep}\label{lem:ordred:itemset-tree}
    Let $X$ be a maximal frequent itemset not contained in $\mathcal Y$.
    Then, $\mathcal T_{\mathcal R}(\mathcal X)$ has a tree $T$
    with a frequency of at least $\eta$, and $T \not\subtreeiso S$
    for any $S \in \mathcal T_{\mathcal R}(\mathcal Y)$.
\end{lemmarep}
\begin{toappendix}
    \begin{proof}
        From \Cref{obs:subtree_iso},
        $T(X)$ is contained in a tree in $T(X_i)$ if and only if
        $X$ is contained in $X_i$.
        Since the occurrence of $X$ is at least $\eta$,
        the occurrence of $T(X)$ in $\mathcal T(\mathcal X)$ is at least $\eta$.
        Moreover, any tree in $\mathcal T(\mathcal Y)$ does not contain
        $T(X)$ as a subtree from \Cref{obs:subtree_iso}.
        Otherwise, it contradicts the maximality of $X$.
        Therefore, $T(X)$ is a maximal frequent tree that is not
        contained in $\mathcal T(\mathcal Y)$.
    \end{proof}
\end{toappendix}

\begin{lemma}\label{lem:ordered:tree-itemset}
    Let $T$ be a maximal frequent tree of $\mathcal T_{\mathcal R}(\mathcal X)$ not
    contained in $\mathcal T_{\mathcal R}(\mathcal Y)$.
    Then, there is an itemset $X$ with a frequency of at least $\theta$ and
    $X$ is not contained in any itemset in $\mathcal Y$.
\end{lemma}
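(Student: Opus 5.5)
Let $T$ be a maximal $\theta$-frequent tree of $\mathcal T_{\mathcal R}(\mathcal X)$ with $T\not\subtreeiso S$ for every $S\in\mathcal T_{\mathcal R}(\mathcal Y)$. The plan is to show that $T\treeiso T(X)$ for some itemset $X$. We then read off the frequency of $X$ and its non-containment from \Cref{obs:subtree_iso}.

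\textbf{Step 1: the root has $n$ children.} Every tree in $\mathcal T_{\mathcal R}(\mathcal X)$ has height at most $2$, and every tree of height at most $2$ whose root has at most $n-1$ children is contained in $R_1$. Since $T\not\subtreeiso R_1$ and $R_1\in\mathcal T_{\mathcal R}(\mathcal Y)$, the root of $T$ must have at least $n$ children. The trees $R_j$ have only $n-1$ root children, so none of them contains $T$. Hence $T$ occurs only in trees $T(X_i)$, each of whose roots has exactly $n$ children. Therefore $r(T)$ has exactly $n$ children, and any ordered subtree isomorphism into $T(X_i)$ maps the $k$-th child of $r(T)$ to the $k$-th child of $r(T(X_i))$.

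\textbf{Step 2: $T$ is of the form $T(X)$.} Every child of $r(T)$ is mapped to a child that has at most one leaf, so each child of $r(T)$ has at most one child. Let $X=\inset{k\in U}{\text{the $k$-th child of } r(T) \text{ is a non-leaf}}$. Then $T\treeiso T(X)$. We do not even need maximality for this step; it needs only the positional rigidity of Step 1.

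\textbf{Step 3: frequency and non-containment.} The trees containing $T$ are exactly the $T(X_i)$ with $T(X)\subtreeiso T(X_i)$. By \Cref{obs:subtree_iso}, these correspond exactly to the $X_i\supseteq X$, so the frequency of $X$ in $\mathcal X$ equals the frequency of $T$, which is at least $\theta=\eta$. Suppose $X\subseteq Y$ for some $Y\in\mathcal Y$. Then \Cref{obs:subtree_iso} gives $T\treeiso T(X)\subtreeiso T(Y)\in\mathcal T_{\mathcal R}(\mathcal Y)$, a contradiction.

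If maximality of $X$ is also wanted, it transfers directly. A frequent $X'\supsetneq X$ would give $T(X)\prec T(X')$ by \Cref{obs:subtree_iso}, and $T(X')$ would be frequent by the same counting, contradicting the maximality of $T$.

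The main obstacle is Step 1: one must rule out that $T$ has fewer than $n$ root children, or that the $R_j$ contribute to its support. Both are handled by the observation that every tree with at most $n-1$ root children and height at most $2$ embeds in $R_1$, which is exactly why $R_1$ was placed in $\mathcal T_{\mathcal R}(\mathcal Y)$.
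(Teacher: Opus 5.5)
Your proof is correct and follows the paper's argument. Like the paper, you force $r(T)$ to have exactly $n$ children via $R_1$, read off $X$ from which root children are non-leaves so that $T\treeiso T(X)$, and conclude with \Cref{obs:subtree_iso}; using $T\not\subtreeiso R_1$ directly, and checking explicitly that no $R_j$ contributes to the support of $T$, is cleaner than the paper's appeal to $R_1$ being maximal frequent. One small fix in Step 1: a height-$2$ tree with at most $n-1$ root children embeds in $R_1$ only if each root child has at most one child, so state first that $T$ inherits this property from any input tree containing it (all of them have it).
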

\begin{proof}
    We show that $r(T)$ has exactly $n$ children.
    If $r(T)$ has less than $n$ children, $T$ is contained in $R_1$.
    Since $R_1$ is a maximal frequent tree, $T$ has exactly $n$ children.

    For the $i$-th child $c_i$ of $r(T)$, $c_i$ has at most one child.
    For $T$, we consider an itemset $X$ defined as follows.
    The $i$-th child $c_i$ has a child if and only if $X$ contains $i$.
    In this construction, $T(X) = T$.
    From \Cref{obs:subtree_iso}, $X$ is an itemset with a frequency of
    at least $\theta$ and $X$ is not contained in any itemset in $\mathcal Y$.
\end{proof}

From \Cref{lem:ordered:tree-itemset,lem:ordred:itemset-tree},
if \textsc{Maximal Frequent Tree Mining} for the ordered case can be
solved in output-polynomial time,
\textsc{Another Maximal Itemset} can be solved in polynomial time.
Therefore, we obtain the following theorem.

\begin{theorem}\label{thm:ordred:hardness}
    There is no output-polynomial time algorithm for \textup{\textsc{Maximal Frequent Tree Mining}}, even if each rooted ordered tree has height of at
    most $2$, unless $\P = \NP$.
\end{theorem}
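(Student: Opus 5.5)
The proof goes through the another-solution framework. We show that \textsc{Another Maximal Frequent Tree} for rooted ordered trees of height at most $2$ is \NP-hard, via the reduction from \textsc{Another Maximal Frequent Itemset} built above. We then use the standard folklore argument that an output-polynomial algorithm for the enumeration problem would decide the another-solution problem in polynomial time.

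\textbf{Step 1: the instance.} Given $(U,\mathcal X,\mathcal Y,\eta)$, we build the multisets $\mathcal T_{\mathcal R}(\mathcal X)$ and $\mathcal T_{\mathcal R}(\mathcal Y)$ with $\theta=\eta$. The construction is polynomial: each $T(X)$ has $O(n)$ vertices, and we add only $\theta\le\size{\mathcal X}$ copies of $R_j$. Every tree has height at most $2$.

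\textbf{Step 2: the given trees really are maximal frequent.} We check that every tree of $\mathcal T_{\mathcal R}(\mathcal Y)$ is a maximal $\theta$-frequent tree of $\mathcal T_{\mathcal R}(\mathcal X)$.
\begin{itemize}
\item For $R_1$: it occurs in the $\theta$ copies of $R_j$, so it is frequent. Any proper extension either gives the root $n$ children, which is not contained in any $R_j$, or adds a vertex below some child. The latter exceeds height $2$ or gives a child two leaves, which appears nowhere. So any proper extension is contained only in trees $T(X_i)$. A root with $n$ children each carrying a leaf is contained only in $T(U)$, and it is infrequent unless $U$ itself is frequent; that trivial case is handled separately.
\item For $T(Y)$ with $Y\in\mathcal Y$: frequency follows from \Cref{obs:subtree_iso}. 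For maximality, a frequent strict extension has $n$ children by the height and ordering constraints, so it equals some $T(X')$ with $X'\supsetneq Y$ frequent, contradicting the maximality of $Y$.
\end{itemize}
The $R_j$ trees do not spoil this, since they have only $n-1$ children and cannot contain any $n$-child tree.

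\textbf{Step 3: the equivalence.} Combining \Cref{lem:ordred:itemset-tree,lem:ordered:tree-itemset}, the itemset instance is a yes-instance if and only if $\mathcal T_{\mathcal R}(\mathcal X)$ has a maximal frequent tree not contained in $\mathcal T_{\mathcal R}(\mathcal Y)$. In the direction of \Cref{lem:ordred:itemset-tree}, we must also confirm that $T(X)$ is maximal within $\mathcal T_{\mathcal R}(\mathcal X)$. This follows as in Step 2: strict extensions of an $n$-child tree are of the form $T(X')$ for some $X'$. In the other direction, the $X$ obtained is maximal, because a frequent $X'\supsetneq X$ would give a frequent $T(X')\succ T$.

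\textbf{Step 4: the algorithmic argument.} Suppose an algorithm $\mathcal A$ runs in time $p(\size{I}+\size{\mathrm{Sol}(I)})$. Run it on $(\mathcal T_{\mathcal R}(\mathcal X),\theta)$ for $p(\size{I}+\size{\mathcal Y}+1)$ steps.
\begin{itemize}
\item If it halts within the budget, inspect its output for a tree outside $\mathcal T_{\mathcal R}(\mathcal Y)$. Membership is testable in polynomial time, since ordered subtree isomorphism is polynomial.
\item If it does not halt, there are more than $\size{\mathcal Y}+1$ solutions, hence another solution exists.
\end{itemize}
Either way we decide the \NP-complete problem of~\cite{Boros:2003} in polynomial time, so $\P=\NP$.

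\textbf{Main obstacle.} The delicate part is Step 2: verifying that the supplied set $\mathcal T_{\mathcal R}(\mathcal Y)$ consists of genuine maximal solutions, especially $R_1$, and handling the degenerate case where some $n$-child tree is frequent and contains $R_1$. I would first try to dispose of that case by noting that if $U$ is $\eta$-frequent, the itemset instance is trivial.
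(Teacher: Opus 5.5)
Your route is the paper's: the same instance $\mathcal T_{\mathcal R}(\mathcal X)$, $\mathcal T_{\mathcal R}(\mathcal Y)$, the same two lemmas \Cref{lem:ordred:itemset-tree,lem:ordered:tree-itemset}, and the folklore another-solution argument. You also supply, correctly, the maximality checks for $T(X)$ and for the recovered itemset, which the paper leaves implicit.

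The step that fails is the $R_1$ bullet of Step 2. Adding a child to the root of $R_1$ does not give a tree whose $n$ root children all carry a leaf. It gives one in which only $n-1$ of them do, namely $T(U\setminus\set{j})$; indeed $R_1 \prec T(U\setminus\set{j})$, by deleting the $j$-th child. The support of $T(U\setminus\set{j})$ is the number of $X_i \supseteq U\setminus\set{j}$. So $R_1$ is \emph{not} maximal as soon as some $(n-1)$-subset of $U$ is $\eta$-frequent. That case is not a trivial itemset instance, and your exception ``$U$ is $\eta$-frequent'' does not cover it. Hence the claim that $\mathcal T_{\mathcal R}(\mathcal Y)$ consists of genuine maximal solutions is false in general. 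The paper makes the same unjustified assumption when it writes ``since $R_1$ is a maximal frequent tree''.

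The fix is that maximality of the supplied trees is never needed.
\begin{itemize}
\item \textbf{Step 3} uses only $T \not\subtreeiso R_1$, and that is part of the hypothesis.
\item \textbf{Step 4} needs only frequency. $R_1$ is supported by the $\theta$ copies $R_j$, and $T(Y)$ by the $X_i \supseteq Y$. A maximal frequent tree contained in a frequent tree $S$ must be isomorphic to $S$. Hence at most $|\mathcal Y|+1$ maximal frequent trees are contained in trees of $\mathcal T_{\mathcal R}(\mathcal Y)$. Exceeding the budget $p(|I|+|\mathcal Y|+1)$ therefore forces a maximal frequent tree outside them, which by Step 3 means a yes-instance.
\end{itemize}
Alternatively, drop $R_1$ from the supplied set whenever it is not maximal. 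Then every frequent tree with fewer than $n$ root children is contained in $R_1 \prec T(U\setminus\set{j})$, so all maximal frequent trees have exactly $n$ root children anyway. With Step 2 replaced by either argument, your proof goes through.
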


\section{Conclusion}\label{sec:conclusion}

In this paper, we investigate the complexity of frequent tree mining problems.
In the ordered case,
these problems are hard even if the height of each tree is at most $2$.
More precisely, enumerating the maximal frequent common trees has no
output-polynomial time unless $\P = \NP$ even if the height of each
tree is at most $2$.
In addition, enumerating the maximal common trees is at least as hard as
\textsc{Dualization}.
This means that if it can be solved in output-polynomial time, then
\textsc{Dualization} can be solved in output-polynomial time.

In the unordered case, the enumeration of maximal common trees has no
output-polynomial time unless $\P = \NP$ even if the height of each
tree is at most $3$.
The enumeration of closed frequent trees is at least as hard as \textsc{Dualization}, even if
the height of each tree is at most $4$.
Moreover, the enumeration of closed frequent common trees can be solved in
polynomial delay if the height of each tree is at most $2$.
As future work, it would be interesting to investigate the complexity
of \textsc{Closed Frequent Tree Mining} when the height is at least $3$.

\bibliography{main}
\end{document}